%% file: main_after_submission.tex
\documentclass[journal]{new-aiaa} 
\usepackage[utf8]{inputenc}
\usepackage{graphicx}
\usepackage{amsmath,bm,bbm}
\usepackage[version=4]{mhchem}
\usepackage{siunitx}
\usepackage{multicol}
\usepackage{amsthm}
\usepackage{longtable,tabularx}
\usepackage{diagbox}
\usepackage{multirow}
\usepackage{algorithm}
\usepackage{algpseudocode}
\input{command}

\graphicspath{{Figures/}}
\newtheorem{theorem}{Theorem}
\newtheorem{corollary}{Corollary}[theorem]
\newtheorem{rem}{Remark}
\newtheorem{proposition}{Proposition}

\usepackage{ifthen}
\usepackage{comment}
\title{Optimal Ground-to-Air Interception with \\ Time-Varying Acceleration Bounds}

\author{Or Nahum\footnote{Ph.D. Candidate, The Stephen B. Klein Faculty of Aerospace Engineering, Technion, \href{mailto:nahum.or@campus.technion.ac.il}{nahum.or@campus.technion.ac.il}.} and Vitaly Shaferman\footnote{Associate Professor, The Stephen B. Klein Faculty of Aerospace Engineering, Technion, \href{mailto:vitalysh@technion.ac.il}{vitalysh@technion.ac.il} (Corresponding Author).}}
\affil{Technion, Israel Institute of Technology, Haifa, Israel, 3200003}

\begin{document}

\maketitle

\begin{abstract}
This paper proposes novel optimal-control-based guidance laws for ground-to-air missiles with time-varying acceleration bounds. In such engagements, as the missile climbs in altitude, its acceleration bound decreases, which may lead to acceleration saturation and significant miss distances if not explicitly accounted for. The proposed guidance laws incorporate hard acceleration command constraints directly into a linear-quadratic optimal-control framework, in contrast to conventional unbounded or softly constrained approaches. Analytically based guidance laws are developed for linear zero-order and first-order strictly proper missile dynamics with arbitrary-order linear target dynamics.
Unlike the constant hard-bound case with minimum-phase missile dynamics, time-varying acceleration command bounds permit an initial unsaturated interval in which the proposed guidance laws can anticipate future saturation and reshape the acceleration profile accordingly. This enables earlier maneuvers when the missile possesses greater low-altitude maneuverability, fundamentally altering the structure of the optimal solution.
The proposed approach is evaluated in nonlinear simulations and compared with equivalent unbounded and softly constrained optimal guidance laws. The results demonstrate substantially improved interception performance under saturation, reduced tuning requirements compared to softly constrained guidance laws, and enhanced capability in challenging engagement scenarios.
\end{abstract}

\footnotetext[0]{Presented as Paper 2024-1986 at the AIAA SciTech Forum, Orlando, FL, 8 - 12 January 2024.}

\section{Introduction}  \label{Introduction}
\lettrine{A}{cceleration} saturation is a challenging problem in missile guidance. It occurs when the missile acceleration or acceleration command exceeds the missile’s achievable capability. Such saturation may result in substantial miss distances and eventual failure to intercept the target. From a control perspective, saturation effectively ``opens'' the closed guidance loop, since the commanded acceleration can no longer be accurately tracked. Most classical guidance laws, including Proportional Navigation (PN) \cite{yuan1948homing} and Augmented Proportional Navigation (APN) \cite{garber1968optimum}, do not explicitly account for acceleration saturation.

Optimal control is a particularly appealing framework for deriving missile guidance laws, because it enables a systematic minimization of prescribed performance criteria. Multiple objectives, such as miss distance, control effort, and terminal interception-angle error, can be incorporated into the guidance law design. A well-known optimal-control-based guidance law is the Optimal Guidance Law (OGL) \cite{cottrell1971optimal}, derived for a linear-quadratic problem formulation with first-order, strictly proper missile dynamics, which minimizes both the miss distance and the control effort. Using a similar framework, \citet{shaferman2008linear} derived a guidance law that additionally minimizes the interception-angle error relative to a prescribed terminal-angle requirement. However, these guidance laws do not explicitly account for acceleration saturation.

Optimal-control-based missile guidance laws that explicitly account for acceleration constraints have been developed in the literature to mitigate saturation effects. Two main types of acceleration bounds have been considered: constant bounds and altitude-dependent bounds, in which the allowable acceleration decreases with altitude. The latter more accurately represents ground-to-air interception scenarios, in which the missile climbs in altitude, and its maneuvering capability decreases throughout the engagement due to the decrease in density.

\citet{rusnak1990optimal} derived an optimal guidance law for linear missile dynamics with a quadratic cost on the control effort and terminal miss distance under a ``hard'' constant acceleration command constraint. They showed that for minimum-phase missiles, the unbounded optimal guidance law remains optimal even when the acceleration command is bounded. However, for non-minimum-phase missiles, the unbounded guidance law becomes suboptimal in the presence of saturation. Nevertheless, due to the complexity of the corresponding bounded optimal solution, the authors recommended using the unbounded guidance law in practice. A subsequent work by \citet{rusnak1991optimal} reached similar conclusions for the same problem formulation with higher-order missile dynamics.

In a later paper, \citet{rusnak2008guidance} extended the previous works to a Linear Quadratic Differential Game (LQDG) with constant missile and target acceleration command constraints. For minimum-phase missile and target dynamics, implicit guidance-law expressions were derived and solved numerically. Unlike optimal guidance laws, which assume the target maneuver is known and, therefore, involve a one-sided optimization, Differential Games Guidance Laws (DGGLs) model the target maneuver as unknown and are consequently formulated as two-sided optimizations \cite{mishley2022linear}.

Another differential game formulation with constant acceleration constraints is a bounded-control, zero-sum game with the miss distance as the cost function. \citet{gutman79} derived the solution for arbitrary-order linear missile and target dynamics. As a representative example, he explicitly solved the problem for ideal target dynamics and first-order, strictly proper missile dynamics, resulting in the well-known DGL0 guidance law. \citet{shinar81} considered a similar formulation for first-order, strictly proper missile and target dynamics and obtained the celebrated DGL1 guidance law.

\citet{weiss2015optimal} proposed an alternative optimal-control-based approach to ``softly'' address constant acceleration command limits. In addition to quadratic penalties on the miss distance and control effort, the cost function included a running penalty on the deviation of the instantaneous control from its average value, thereby reducing control variations throughout the engagement. This modification can mitigate acceleration saturation by encouraging smoother acceleration command profiles. Intuitively, penalizing deviations from the average control favors a more uniform control history, thereby reducing the peak control magnitude during the engagement.

All the guidance laws discussed thus far considered constant acceleration command limits. However, in many practical scenarios, a time-varying bound provides a more accurate representation of the missile’s maneuvering capability. \citet{ben2003new} proposed a soft constraint approach based on an exponential time-to-go weighting of the control-effort running cost. The motivation stems from the fact that the maneuverability of a ground-to-air missile typically deteriorates with altitude due to the reduction in atmospheric density. Since the density profile can be approximated exponentially as a function of altitude, the suggested weighting function is physically well motivated. The derived guidance law, therefore, tends to command larger accelerations earlier in the engagement, when the missile possesses greater maneuverability.

\citet{taub2013intercept} also proposed a soft acceleration constraint implemented through a time-varying weighting of the control-effort cost. Iterative tuning was required to achieve small miss distances while avoiding saturation. In addition, the proposed guidance law enforced a terminal interception-angle constraint.
\citet{shima2002time} derived a DGGL with hard time-varying acceleration command bounds on both the missile and the target, using the miss distance as the cost function. However, the resulting guidance law did not minimize the control effort, which is often critical for preserving interceptor maneuverability in ground-to-air engagements. Furthermore, as a DGGL, the formulation was inherently conservative and did not exploit possible information regarding the future target maneuver.

This paper proposes bounded optimal guidance laws for ground-to-air missiles with hard time-varying acceleration command constraints. The guidance problem is formulated as a linear-quadratic optimal-control problem with time-varying acceleration command bounds arising from altitude-dependent acceleration limits. The proposed guidance laws constitute bounded extensions of the classical unbounded OGL.

Ground-to-air engagements are characterized by significant variations in the missile’s maximum achievable acceleration due to the altitude-dependent reduction in atmospheric density. Consequently, neglecting acceleration saturation may prevent successful interception even in engagements that would otherwise be interceptable. Unlike the constant hard-bound case with minimum-phase missile dynamics analyzed in \cite{rusnak1990optimal,rusnak1991optimal}, time-varying acceleration command bounds may induce multiple transitions between unsaturated and saturated arcs, typically giving rise to an initial unsaturated interval prior to saturation in ground-to-air interception scenarios. The existence of this interval enables the proposed guidance laws to anticipate the upcoming constraint activation and reshape the acceleration command profile accordingly. Physically, this initial unsaturated interval arises from the higher air density at lower altitudes, which permits larger acceleration command limits and enables the missile to maneuver more aggressively before entering the saturated region. This fundamentally alters the structure of the optimal solution.
Furthermore, unlike softly constrained approaches, the proposed framework does not require iterative tuning of weighting parameters. Two guidance laws are developed: one for zero-order missile dynamics and another for first-order strictly proper missile dynamics.

The proposed approach is evaluated through nonlinear simulations and compared with equivalent unbounded and softly constrained optimal guidance laws, demonstrating the benefits of explicitly accounting for hard time-varying acceleration command constraints in challenging engagement scenarios. An earlier version of the proposed approach was presented in \cite{nahum2024optimal} and is extended here.

The remainder of this paper is organized as follows. Section \ref{Model Derivation} presents the model derivation, while the guidance problem formulation and the order reduction are introduced in Section \ref{Guidance Problem Formulation and Order Reduction}. The proposed Bounded Optimal Guidance Laws (BOGLs) are derived in Section \ref{Bounded Optimal Guidance Law}, followed by the implementation algorithms in Section \ref{Algorithm Implementation}. Simulation results are provided in Section \ref{Simulations}, and concluding remarks are given in Section \ref{Conclusion}. \ref{Appendix_Derivation} provides detailed derivations of the relevant transition-matrix components, the zero-effort miss, the reduced-order dynamics, and the required integrals. \ref{Appendix_Proof} establishes sufficient conditions for the existence of at most two transitions between unsaturated and saturated arcs, while \ref{Appendix_Algo} summarizes the implementation algorithms for the zero-order BOGL.

\section{Model Derivation} \label{Model Derivation}

The engagement geometry is shown in \figref{Engagement Geometry}. The missile and the target are denoted by the subscripts $M$ and $T$, respectively. The acceleration, velocity, and flight-path angle are denoted by $a$, $v$,
and $\gamma$, respectively. The relative displacement between the target and the missile perpendicular to the initial Line-Of-Sight (LOS),  denoted by $LOS(0)$, is $y$. The angle between the LOS and the inertial $X$ direction is $\sigma$, and the distance between the target and the missile is $r$. 
\begin{figure}[hbt!] 
\centering
\includegraphics[width=0.65\textwidth]{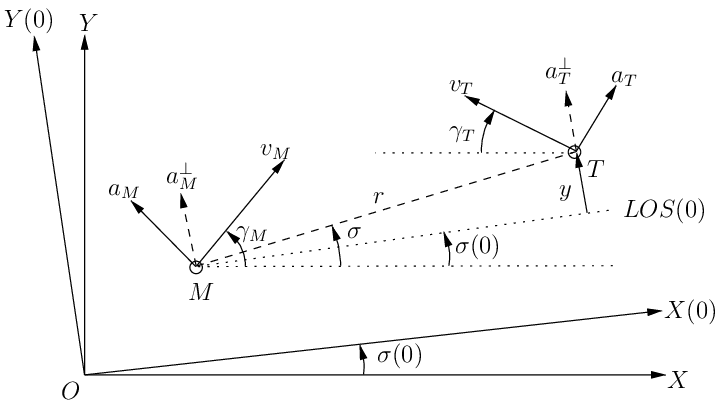}
\caption{The engagement geometry.}
\label{Engagement Geometry}
\end{figure}

The target and the missile accelerations perpendicular to $LOS(0)$ are $a_{T}^{\perp}$ and $a_{M}^{\perp}$, respectively, and are given by
\begin{subequations}
\begin{align} \label{K_def}
a_{T}^{\perp} &= K_{T} a_{T}, \quad \ K_{T}=\cos\left[\gamma_{T}(0)+\sigma(0)\right]\\
a_{M}^{\perp} &= K_{M} a_{M}, \quad\ K_{M}=\cos\left[\gamma_{M}(0)-\sigma(0)\right] 
\end{align}
\end{subequations}

\subsection{Nonlinear Kinematics} \label{Non-Linear Kinematics}
The nonlinear kinematics equations are expressed in polar coordinates, $r$ and $\sigma$. 
The radial speed along the LOS is
\begin{equation} \label{vr}
\dot r  = v_{r} = - \left[ v_{M} \cos(\gamma_M-\sigma) +v_{T} \cos(\gamma_T+\sigma)  \right]
\end{equation}
and the relative speed perpendicular to the LOS is
\begin{equation} \label{v_perp}
 r \dot \sigma  = v_{\perp} = - v_{M} \sin(\gamma_M-\sigma) +v_{T} \sin(\gamma_T+\sigma)   \quad
    \Longrightarrow \quad  \dot \sigma  = \frac{v_{\perp}}{r}
\end{equation}
The target and missile are assumed to move at constant speeds, and their flight-path angle rates are
\begin{equation} \label{path_angles}
 \dot \gamma_{M}  = \frac{a_M}{v_M}, \quad \dot \gamma_{T}  = \frac{a_T}{v_T}
\end{equation}
Arbitrary order, Linear Time-Invariant (LTI) dynamic models are assumed for both the missile and the target. The dynamic models can be described by the following state-space representations
\begin{subequations} \label{state_space_target&missile}
\begin{align}
\dot {\mathbf{x}}_{M} &= \mathbf{A}_{M} \mathbf{x}_{M}+\mathbf{b}_{M} u, \quad   a_M=\mathbf{c}_{M}^{T} \mathbf{x}_{M}+d_{M}u \\
\dot{\mathbf{x}}_{T} &= \mathbf{A}_{T} \mathbf{x}_{T}+\mathbf{b}_{T} w, \quad  a_T=\mathbf{c}_{T}^{T} \mathbf{x}_{T}+d_{T}w
\end{align}
\end{subequations}
where ${\mathbf{x}}_{M}$ and ${\mathbf{x}}_{T}$ are the state vectors of the missile's and the target's internal state variables, respectively, and the commanded accelerations of the missile and the target are $u$ and $w$, respectively.

\subsection{Linearized Kinematics for Guidance Law Derivation}
In this subsection, a linearized model is developed to simplify the analytical derivation of the guidance laws. The derivation is valid when the missile and the target are on a collision triangle or when their deviations from it are minor. Since the missile and the target are not necessarily near or on a collision course, the nonlinear dynamics will be linearized at every time step for the guidance law implementation (extended linearization). 
Let the state vector be
\begin{equation} \label{state_vector}
\mathbf{x}= 
\left[y \quad \dot{y} \quad \mathbf{x}_{T}^{T} \quad \mathbf{x}_{M}^{T}\right]
^{T}
\end{equation}
The Equation Of Motion (EOM) perpendicular to the LOS is
\begin{equation} \label{y_ddot}
    \ddot{y} =  K_{T}a_{T}-K_{M}a_{M}=K_{T}(\mathbf{c}_{T}^{T} \mathbf{x}_{T}+d_{T}w)-K_{M}(\mathbf{c}_{M}^{T}\mathbf{x}_{M}+d_{M}u)
\end{equation}
Combining \TWOeqsref{state_space_target&missile}{y_ddot} we obtain the following state-space model
\begin{equation}  \label{state_space}
    \dot{\mathbf{x}} = \mathbf{Ax}+\mathbf{b}u+\mathbf{c}w
\end{equation}
where
\begin{equation} \label{state space terms}
    \mathbf{A}=
    \begin{bmatrix}
        0 & 1 & [\mathbf{0}] & [\mathbf{0}] \\
        0 & 0 & K_{T}\mathbf{c}_{T}^{T} & -K_{M}\mathbf{c}_{M}^{T} \\
        [\mathbf{0}] & [\mathbf{0}] & \mathbf{A}_{T} & [\mathbf{0}] \\
        [\mathbf{0}] & [\mathbf{0}] & [\mathbf{0}] & \mathbf{A}_{M} \\
    \end{bmatrix}
    ,\quad     \mathbf{b}=
    \begin{bmatrix}
        0  \\
        -K_{M}d_{M} \\
        [\mathbf{0}] \\
        \mathbf{b}_{M} \\
    \end{bmatrix}
        ,\quad     \mathbf{c}=
    \begin{bmatrix}
        0  \\
        K_{T}d_{T} \\
        \mathbf{b}_{T} \\
        [\mathbf{0}] \\
    \end{bmatrix}
\end{equation}
and $[\mathbf{0}]$ is the zero matrix with appropriate dimensions.
The time-to-go, which is the time remaining from the current moment until the end of the scenario, is approximated by
\begin{equation} \label{tgo}
    t_{go}\dfn t_{f}-t \approx \frac{r}{v_{c}}
\end{equation}
where $t_{f}$ is the final time and $v_{c}=-v_{r}$ is the closing speed.

\section{Guidance Problem Formulation and Order Reduction} \label{Guidance Problem Formulation and Order Reduction}
We choose a  quadratic cost function containing both a terminal cost on the miss distance and a running cost on the control effort
\begin{equation} \label{spesific cost}
    \mathcal{J}=\frac{\alpha}{2}x_{1}^{2}(t_{f})+\frac{1}{2} \int_{t}^{t_{f}} u^{2}(\xi) \,\dd\xi
\end{equation}
where $\alpha$ is a positive weighting parameter. Notice that $\alpha \rightarrow \infty$ with no acceleration bounds yields a perfect interception. 
The cost function should be minimized by the bounded controller $u$. 
In the derivation, the target's acceleration command, $w$, is assumed known to the missile. Since the cost function presented in \eqref{spesific cost} contains only the first state at the final time, and the dynamics are linear, the problem can be reduced to a scalar problem. The method that will be used is often called the ``Terminal projection'' method \cite{bryson2018applied}.
Let us define a new scalar variable
\begin{equation} \label{Z}
    Z(t)=\mathbf{D\Phi}(t_{f},t)\mathbf{x}+\mathbf{D} \int_{t}^{t_{f}} \mathbf{\Phi}(t_{f},\tau) \mathbf{c} w \,\dd\tau, \quad \mathbf{D}= 
    \begin{bmatrix}
    1 & 0 & [\mathbf{0}] & [\mathbf{0}]
    \end{bmatrix}
\end{equation}
where $\mathbf{\Phi}(t_{f},t)$ is the transition matrix associated with the dynamics presented in \eqref{state_space}. Since the dynamics are LTI, the transition matrix is a function of the subtraction of its two arguments, i.e., $\mathbf{\Phi}(t_{f},t)=\mathbf{\Phi}(t_{f}-t)=\mathbf{\Phi}(\tgo)$. 
Differentiating \eqref{Z} with respect to time and substituting \eqref{state_space} and the following transition matrix property \cite{gutman2005applied}
\begin{equation} \label{ODE_Phi}
    \dot{\mathbf{\Phi}}(t_{f},t) = -\mathbf{\Phi}(t_{f},t)\mathbf{A}, \quad \mathbf{\Phi}(t,t)=\mathbf{I}
\end{equation}
we obtain
\begin{equation} \label{Z_dot}
        \dot{Z}(t)  =\varphi (t_{go})u, \quad \varphi (t_{go})\triangleq \mathbf{D} \mathbf{\Phi} (t_{go})\mathbf{b}
\end{equation}
where $\varphi(t_{go})$ depends on the dynamics of the problem and requires the derivation of $\mathbf{D}\mathbf{\Phi} (t_{go})$, i.e., the first row of  $\mathbf{\Phi} (t_{go})$.
It is evident from \eqref{tgo} that for a general function $\xi$ the following statement holds
\begin{equation} \label{Derivative}
    \frac{\dd\xi}{\dd t_{go}}=\frac{\dd\xi}{\dd t} \cdot \frac{\dd t}{\dd t_{go}} = -\dot{\xi}
\end{equation}
Substituting \eqref{Derivative} into \eqref{ODE_Phi} and multiplying by $\mathbf{D}$ from the left-hand-side, we obtain the following Ordinary Differential Equation (ODE)
\begin{equation} \label{Phi_ode}
    \mathbf{D}\frac{\dd\mathbf{\Phi}(\tgo)}{\dd t_{go}} = \mathbf{D\Phi}(\tgo)\mathbf{A}, \quad 
    \mathbf{\Phi}(t_{go}=0)=\mathbf{I}
\end{equation}
which can be solved in closed form, yielding 
\begin{equation} \label{DPhi}
      \mathbf{D}\mathbf{\Phi} =
      \left[1 \; \; t_{go} \; \; \bm{\phi}_{1T}^{T} \; \; \bm{\phi}_{1M}^{T} \right], \quad 
      \bm{\phi}_{1T}^{T} = K_{T} \mathscr{L}_{t_{go}}^{-1} \left[ \frac{1}{s^2} \mathbf{c}_{T}^{T} \left(s\mathbf{I}-\mathbf{A}_T \right)^{-1} \right], \quad 
    \bm{\phi}_{1M}^{T} = -K_{M} \mathscr{L}_{t_{go}}^{-1} \left[ \frac{1}{s^2} \mathbf{c}_{M}^{T} \left(s\mathbf{I}-\mathbf{A}_M \right)^{-1} \right] 
\end{equation}
where $\mathscr{L}_{t_{go}}^{-1}$ is the inverse Laplace transform in the $t_{go}$ domain. For brevity, the dependence of the transition matrix on $\tgo$ is omitted in the remainder of the paper unless needed for clarity.
Substituting \eqref{DPhi} into \eqref{Z_dot} we get
for zero-order and first-order, strictly proper missile dynamics, respectively
\begin{subequations} \label{phi0_phi1}
\begin{align} \label{phi0}
    &\varphi_{0}(\tgo)=-K_{M}\tgo\\  \label{phi1}
     &\varphi_{1}(\theta)=-K_{M}\tau_{M}\psi(\theta), \quad     \psi(\theta) \dfn {\mathrm e}^{-\theta}+\theta-1,  \quad \theta \dfn \frac{t_{go}}{\tau_{M}}
\end{align}   
\end{subequations}
where $\tau_{M}$ is the time constant of the missile.
Substituting \eqsref{phi0_phi1} into \eqref{Z_dot}, we obtain the dynamics of $Z(t)$ for zero-order and first-order, strictly proper missile dynamics.
To improve readability, the derivations of \TWOeqsref{DPhi}{phi0_phi1} are given in \ref{Appendix_Derivation}. 

The cost function in \eqref{spesific cost} can be rewritten
by substituting $t=t_{f}$ in \eqref{Z}, obtaining $Z(t_{f})=x_{1}(t_{f})$, and 
\begin{equation}
    \mathcal{J}=\frac{\alpha}{2}Z^{2}(t_{f})+\frac{1}{2} \int_{t}^{t_{f}} u^{2}(\xi)\,\dd\xi
\end{equation}
The variable $Z$ is known as the Zero-Effort-Miss (ZEM), and it equals the miss distance if the missile acceleration command is nulled from the current time onward and the target performs its assumed maneuver.

\section{Bounded Optimal Guidance Laws} \label{Bounded Optimal Guidance Law}
\subsection{Optimal Controller Derivation}
The Hamiltonian of the problem is
 \begin{equation}
 \mathcal{H} = \frac{1}{2}u^{2}+\lambda  \dot{Z}=\frac{1}{2}u^{2}+\lambda \varphi (t_{go}) u\
\end{equation}
where $\lambda$ is the co-state, which satisfies
 \begin{equation} 
    \dot{\lambda}=-\frac{\partial \mathcal{H}}{\partial Z}=0, \quad 
    \lambda(t_{f})=
    \alpha Z(t_{f}) \quad \Longrightarrow \quad \lambda (t) =  \alpha Z(t_{f})
\end{equation}
The unbounded optimal controller satisfies 
 \begin{equation} \label{u_opt_free}
       u^{*} (t_{go})= \arg \min_{u} \mathcal{H} \quad
    \Longrightarrow \quad \frac{\partial \mathcal{H}}{\partial u}=0 \quad
    \Longrightarrow \quad 
     u^{*} (t_{go})= -\alpha Z(t_{f}) \varphi (t_{go})
\end{equation}
One can derive $\varphi (t_{go})$ and substitute it into the optimal controller's structure in \eqref{u_opt_free} for a specific dynamic model. 

Substituting \eqsref{phi0_phi1} into \eqref{u_opt_free} we obtain the unbounded optimal controller structures for zero-order and first-order, strictly proper missile dynamics, respectively
\begin{subequations}\label{Unbounded Optimal Controllers - zero and First}
\begin{align}\label{Unbounded Optimal Controller - zero}
&u^{*}_{0}(t_{go}) = \mathcal{K}_{0}t_{go}, \quad  \mathcal{K}_{0}=\alpha Z(t_{f})K_{M} \\ \label{Unbounded Optimal Controller - 1st}
&u^{*}_{1}(\theta) = \mathcal{K}_{1}\psi (\theta), \quad  \mathcal{K}_{1}=\alpha Z(t_{f})K_{M} \tau_{M}
\end{align}  
\end{subequations}

In the bounded problem, saturation is explicitly considered. The bounded optimal controller satisfies
\begin{equation} \label{Bounded Optimal controller}
    \bar{u}^{*} (t_{go})= \arg \min_{\left| u\right|\le u_{\max}} \mathcal{H} \quad \Longrightarrow \quad     \bar{u}^* (t_{go})= -u_{\max}(t_{go})\cdot \text{sat} \left[ \frac{\alpha \bar{Z}(t_{f}) \varphi (t_{go})}{u_{\text{max}}(t_{go})} \right], \quad \text{sat}(\xi) =
    \begin{cases}
			\frac{\xi}{\left| \xi \right|}, & \left| \xi \right|>1\\
            \xi, & \left| \xi \right|\le1
	\end{cases}
\end{equation}
where sat is the saturation function and ${u_{\text{max}}(t_{go})}$ is the absolute value of the missile's acceleration command limit, which might be constant or time-dependent. Note that $\bar{Z}(t_{f})$ is the terminal ZEM in the bounded scenario. We specifically marked ${Z}(t_{f})$ in the bounded scenario with a bar to highlight that ${Z}(t_{f})$ in the bounded and unbounded cases might be different.

Substituting \eqsref{phi0_phi1} into \eqref{Bounded Optimal controller}, we obtain the bounded optimal controller structures for zero-order and first-order, strictly proper missile dynamics, respectively
\begin{subequations}\label{Bounded Optimal Controllers - zero and 1st}
\begin{align}\label{Bounded Optimal Controller - 0th} 
&\bar{u}^{*}_{0}(t_{go}) = u_{\max}(t_{go})\cdot \text{sat} \left[ \frac{\bar{\mathcal{K}}_{0}t_{go}}{u_{\text{max}}(t_{go})} \right], \quad  \bar{\mathcal{K}}_{0}=\alpha \bar{Z}(t_{f})K_{M} \\ \label{Bounded Optimal Controller - 1st}
&\bar{u}^{*}_{1}(\theta) = u_{\max}(\theta)\cdot \text{sat} \left[ \frac{\bar{\mathcal{K}}_{1} \psi (\theta)}{u_{\text{max}}(\theta)} \right], \quad  \bar{\mathcal{K}}_{1}=\alpha \bar{Z}(t_{f})K_{M} \tau_{M}
\end{align}
\end{subequations}
 For brevity, we will not carry the dependency of $\varphi$ and $u_{\text{max}}$ on $t_{go}$ in the rest of the paper, unless it is needed to understand the expressions. 
 
As the introduction noted, the missile's acceleration limit in a ground-to-air engagement is typically altitude-dependent. We assume that the missile's acceleration limit as a function of altitude is given.
Assuming the missile and the target are on a collision course, or their deviations from it are small, then the altitude change rate is approximately constant. Therefore, the acceleration limit as a function of the missile's altitude behaves similarly to the acceleration limit as a function of time. 
To simplify the problem, the magnitude of the missile's acceleration command limit is approximated as an $n$-th order polynomial function of the time-to-go
\begin{equation}\label{umax(tgo)}
u_{\text{max}}\left( t_{go} \right) = \sum_{i=0}^{n} \tilde{b}_{i} t_{go}^{i}, \quad \tilde{b}_{0}>0
\end{equation}
where $\tilde{b}_{0}>0$ since the missile is assumed to always have some maneuverability. The approximation was chosen to be a polynomial because this framework is very flexible and can approximate any continuous function with a high enough polynomial order. The algorithm for determining the polynomial coefficients from a given altitude-dependent acceleration-limit profile is presented in \secref{Finding the Parameters}.

The bounded optimal controller structures in \eqsref{Bounded Optimal Controllers - zero and 1st} are comprised of intervals in which the constraint is active (the missile is saturated), and from intervals in which the constraint is inactive (the missile is unsaturated). To solve the problem, we need to find the times at which the constraint transitions from active to inactive, and vice versa. We call these times the ``switching times''. Since in guidance problems it is often more convenient to use the time-to-go instead of time, we find the “switching times-to-go” as part of the guidance law derivation.

\subsection{The Switching Times-to-Go}
We define the switching times-to-go as the intersection points between the absolute value of the unsaturated optimal acceleration command  (i.e., $|\alpha \bar{Z}(t_{f}) \varphi|$) and its time-to-go-dependent limit.
Qualitative acceleration command profiles for zero and first-order, strictly proper missile dynamics and a time-to-go polynomial (in this case, parabolic) missile acceleration command limit can be seen in \figref{Qualitatively Acceleration Profile}. At the beginning of the scenario, the missile has the highest maneuver capability, and when it climbs in altitude, the density decreases, and the missile's maneuverability decreases as well.
For typical ground-to-air interception scenarios and zero or first-order, strictly proper missile dynamics, the missile acceleration command can be divided into three intervals, which can be seen in \figref{Qualitatively Acceleration Profile}: before the saturation (where $t_{go} \in [ t_{{go}_{s_{2}}},t_{f} ]$), during the saturation (where  $t_{go} \in ( t_{{go}_{s_{1}}},t_{{go}_{s_{2}}} )$), and after the saturation (where $t_{go} \in [ 0,t_{{go}_{s_{1}}} ]$). A new function, flag, shall be defined by (presented in \figref{Qualitatively Acceleration Profile}
for the first-order dynamics)
\begin{equation} \label{Flag}
    \text{flag}(\xi) =
    \begin{cases}
			3, & \xi \in \left[t_{go_{s_{2}}},t_{f}\right]\\
            2, & \xi \in \textcolor{black}{\left(t_{go_{s_{1}}},t_{go_{s_{2}}}\right)}\\
            1, & \xi \in \textcolor{black}{\left[0,t_{go_{s_{1}}}\right]}
	\end{cases}
\end{equation}
\begin{figure}[hbt!]
\centering
\includegraphics[width=0.57\textwidth]{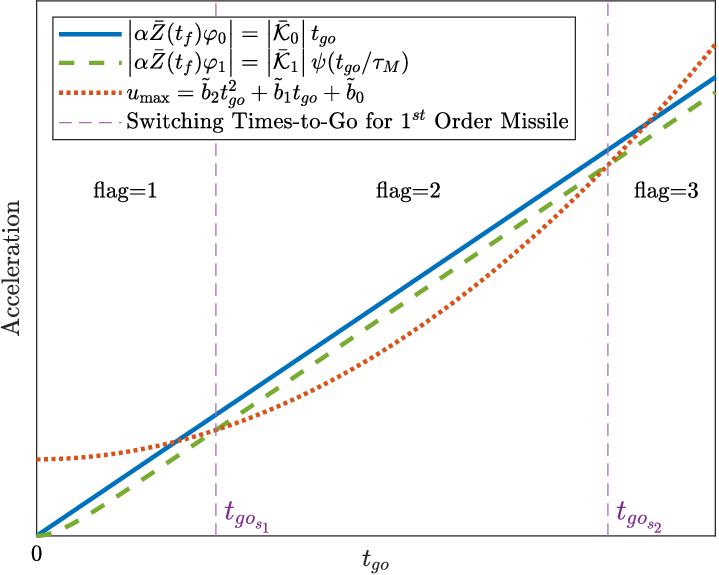}
\caption{Qualitative representation of $t_{go_{s_{1}}}$ and $t_{go_{s_{2}}}$ and the flag function.}
\label{Qualitatively Acceleration Profile}
\end{figure}

The derivation assumes a maximum of two intersection points between the absolute value of the unsaturated optimal acceleration command and its limit $u_{\max}$.
Equivalently, we assume the function $\ell$, defined by
\begin{equation} \label{l_def}
    \ell(\tgo)\triangleq \umax(\tgo)-\alpha |\bar{Z}(t_{f}) \varphi (\tgo) |
\end{equation}
has no more than two roots in $\bar{I}\triangleq \left[0,\tgo^{initial}\right]$.
Sufficient conditions to satisfy this assumption are: 
$\ell(\tgo)$, $\ell'(\tgo)$, and  $\ell''(\tgo)$ are continuous  in $\bar{I}$, differentiable in $I\triangleq \left(0,\tgo^{initial}\right)$, and,
\begin{subequations} \label{3cond}
\begin{align}
&u_{\max}'''(\tgo)\ge0, \quad \forall \tgo\in I \label{umax_cond}\\
&\abs{\varphi (\tgo)}''' \le0, \quad \forall \tgo\in I \label{uopt_cond} \\
&{u}_{\max}(\tgo=0)>\alpha \abs{\bar{Z}(t_{f})} \cdot \abs{\varphi (\tgo=0)} \label{zero_cond}
\end{align}
\end{subequations}
where $\Box '''$ denotes a third derivative with respect to the time-to-go. To improve readability, the proof that \eqsref{3cond} are sufficient conditions for having at most two intersection points is presented in \ref{Appendix_Proof}.
As mentioned in the introduction, the missile's altitude-dependent acceleration command limit is typically exponentially decreasing, i.e.,
\begin{equation} \label{u_max_A_beta}
    \umax(Y)=A{\mathrm e}^{-\beta Y}, \quad A,\beta>0
\end{equation}
where $Y$ is the missile's altitude. Under the assumption of small deviations from a collision course, and assuming the missile is launched from the ground, the missile's acceleration command limit is approximately
\begin{equation} \label{umax_approx_t}
    \umax(t)\approx A{\mathrm e}^{-\beta v_{M}\sin(\gamma_{M})t}
\end{equation}
Substituting \eqref{tgo} into \eqref{umax_approx_t}, we obtain
\begin{equation} \label{umax_tgo_app}
    \umax(\tgo)\approx C\cdot A{\mathrm e}^{\beta v_{M}\sin(\gamma_{M})\tgo}, \quad C={\mathrm e}^{-\beta v_{M}\sin(\gamma_{M})t_{f}}
\end{equation}
It is evident from \eqref{umax_tgo_app} that under these assumptions, the acceleration command limit is approximately exponentially increasing in $t_{go}$.
Due to the exponential nature of the acceleration limit, and because the polynomial $u_{\max}(t_{go})$ is an approximation of this exponential, the condition in \eqref{umax_cond} is likely to be satisfied for any polynomial order of $u_{\max}(t_{go})$.  Note that for a parabolic acceleration command limit, this condition is always satisfied. Since the designer chooses the polynomial approximation order, the condition in \eqref{umax_cond} can always be satisfied (an extended discussion is presented in \secref{Finding the Parameters}).

Moreover, from \eqsref{phi0_phi1} we obtain $\varphi_{0}(\tgo=0)=\varphi_{1}(\theta=0)=0$. Since the missile always has some maneuverability, particularly at $\tgo=0$, the condition in \eqref{zero_cond} is satisfied for both zero-order and first-order, strictly proper missile dynamics. Additionally, since
\begin{equation}
\abs{\varphi_{0} (\tgo)}'''\equiv0, \quad \abs{\varphi_{1} (\tgo)}'''=
        \frac{1}{\tau_{M}^{3}}\frac{\partial^{3} \abs{\varphi_{1} (\theta)}}{\partial \theta^{3}}=-\frac{\abs{K_{M}}}{\tau_{M}^{2}}{\mathrm e}^{-\tgo/\tau_{M}}   
\end{equation}
The condition in \eqref{uopt_cond} is satisfied for both zero-order and first-order, strictly proper missile dynamics as well. Furthermore, the smoothness requirements on $\ell(\tgo)$ are also met for the polynomial acceleration limits and the aforementioned missile dynamics.  

In conclusion, for linear, zero and first-order, strictly proper missile dynamics, \TWOeqsref{uopt_cond}{zero_cond} are guaranteed to be satisfied, \eqref{umax_cond} is excepted to be satisfied, and if for a given polynomial it is not satisfied, one should change the polynomial order, or choose $u_{\max}(\tgo)$ to be parabolic, so the condition is satisfied automatically. Thus, a maximum of two intersection points between the optimal unsaturated acceleration command curve and the acceleration bounds curve is guaranteed.
If there are two intersection points, the scenario starts unsaturated at flag=3, moves on to the saturated region at which flag=2, and ends at the last unsaturated part at which flag=1.
If there is only one intersection point, the scenario starts in the saturated region (flag=2) and ends in the last unsaturated region (flag=1). There is no possibility that in the case of one intersection point the scenario starts at the unsaturated region (flag=3), continues to the saturated  region, and stays there until the end of the scenario, since
\begin{equation} \label{lim_u}
     \lim_{t_{go}\to0} \left|\bar{u}^{*}_{0}(t_{go})\right|=\lim_{\theta\to0} \left|\bar{u}^{*}_{1}(\theta)\right|=  0 < \tilde{b}_{0} =  \lim_{t_{go}\to0} u_{\max}(t_{go})
\end{equation}
\eqref{lim_u} implies that there is always an interval at the end of the scenario in which the missile is unsaturated, as long as the free coefficient of the polynomial $u_{\max}(\tgo)$, i.e. $\tilde{b}_{0}$, is greater than zero.
As mentioned before, we assume the missile always has some maneuverability, particularly at $\tgo=0$, and therefore, $\tilde{b}_{0}>0$. However, it should be noted that if the terminal miss distance, $\bar{Z}(t_f)$, is large, the unsaturated interval at the end of the scenario might be very short.
If there are no intersection points, the entire scenario is unsaturated, i.e., flag=1.

\subsection{Calculating the Terminal ZEM} \label{Ztf_BOGL}
It is evident from \eqref{Bounded Optimal controller} that $\bar{Z}(t_{f})$ must be computed to find the optimal controller. We compute it by integration of \eqref{Z_dot} from $t$ to $t_{f}$. Since the calculation depends on the value of $\text{flag}(t_{go})$, let us calculate $\bar{Z}(t_{f})$ for every possible value of $\text{flag}(t_{go})$. During the derivation, we assume the switching times-to-go are known. The algorithm to find them is discussed in \secref{Algo_SwitchingTimes}.

\subsubsection{\texorpdfstring{$\text{flag}(t_{go})=3$ (the missile is before the saturation region)}{The missile is before the saturation region}} 
Substituting \TWOeqsref{Derivative}{Bounded Optimal controller} into \eqref{Z_dot} and integrating, recalling saturation occurs between $t_{go_{s_{1}}}$ and $t_{go_{s_{2}}}$, yields
 \begin{equation} \label{3}
 \begin{split}
    &\bar{Z}(t_{f}) = Z(t)- \alpha \bar{Z}(t_{f})\int_{0}^{t_{go_{s_{1}}}} \varphi^{2}(\xi) \, \dd\xi +   \text{sign}\left[\bar{Z}(t_{f})\right] \int_{t_{go_{s_{1}}}}^{t_{go_{s_{2}}}} \varphi(\xi) u_{\max}(\xi) \, \dd\xi - \alpha \bar{Z}(t_{f})\int_{t_{go_{s_{2}}}}^{t_{go}} \varphi^{2}(\xi) \, \dd\xi \\
\end{split}
\end{equation}
Additionally, the following statement always holds (see \ref{Appendix_Derivation} for proof): $\text{sign}\left[ \bar{Z}(t_{f})\right]=\text{sign}\left[ Z(t)\right]$. Rearranging \eqref{3}  
we obtain an expression for $\bar{Z}(t_{f})$ as a function of $t_{go}$ , $t_{go_{s_{1}}}$, and $t_{go_{s_{2}}}$
\begin{equation} \label{Ztf_3}
    \bar{Z}(t_{f})\big|_{\text{flag}(t_{go})=3} = \frac{Z(t)+ \text{sign}\left[Z(t)\right] \left[ I_{1}(t_{go_{s_{2}}})-I_{1}(t_{go_{s_{1}}}) \right]}{1+\alpha \left[I_{2}(t_{go})-I_{2}(t_{go_{s_{2}}})+I_{2}(t_{go_{s_{1}}}) \right]}
\end{equation}
where
\begin{equation} \label{integrals}
    I_{1}(t_{go})= \int_{0}^{t_{go}} u_{\max}(\xi) \varphi(\xi) \, \dd\xi, \quad I_{2}(t_{go})=  \int_{0}^{t_{go}} \varphi^{2}(\xi) \, \dd\xi
\end{equation}

\subsubsection{\texorpdfstring{$\text{flag}(t_{go})=2$ (the missile is in the saturation region)}{The missile is in the saturation region}}
Similar to the previous case, keeping in mind that saturation occurs between $t_{go_{s_{1}}}$ and $t_{go}$, we obtain 
 \begin{equation} \label{2}
    \bar{Z}(t_{f}) = Z(t)- \alpha \bar{Z}(t_{f}) \int_{0}^{t_{go_{s_{1}}}} \varphi^{2}(\xi) \, \dd\xi  + \text{sign}\left[\bar{Z}(t_{f})\right] \int_{t_{go_{s_{1}}}}^{t_{go}} \varphi(\xi) u_{\max}(\xi) \, \dd\xi 
\end{equation}
Substituting \eqref{integrals} into \eqref{2} and rearranging, we obtain an expression for $\bar{Z}(t_{f})$ as a function of $t_{go}$ and $t_{go_{s_{1}}}$

 \begin{equation} \label{Ztf_2}
    \bar{Z}(t_{f}) \big|_{\text{flag}(t_{go})=2} = \frac{Z(t)+ \text{sign}\left[Z(t)\right] \left[ I_{1}(t_{go})-I_{1}(t_{go_{s_{1}}}) \right]}{1+\alpha I_{2}(t_{go_{s_{1}}})}
\end{equation}

\subsubsection{\texorpdfstring{$\text{flag}(t_{go})=1$ (the missile is after the saturation region)}{The missile is after the saturation region}}
Similarly to the previous cases, keeping in mind that there is no saturation in this case, we obtain
 \begin{equation} \label{Corflag1}
    \bar{Z}(t_{f}) = Z(t) - \alpha \bar{Z}(t_{f})\int_{0}^{t_{go}} \varphi^{2}(\xi) \, \dd\xi 
\end{equation}
Substituting \eqref{integrals} into \eqref{Corflag1} and rearranging, we obtain an expression for $\bar{Z}(t_{f})$ as a function of $t_{go}$ 
\begin{equation} \label{Ztf_1}
\bar{Z}(t_{f}) \big|_{\text{flag}(t_{go})=1}=\frac{Z(t)}{1+\alpha I_{2}(t_{go})}
\end{equation}
which is identical to the $Z(t_{f})$ expression of the well-known unbounded OGL.

It is clear from \THREEeqsref{Ztf_3}{Ztf_2}{Ztf_1} that $Z(t)$ has to be calculated to compute the optimal controller. 
Substituting \eqref{DPhi} into \eqref{Z} and rearranging yields  
\begin{equation} \label{eq:Z_solution_general}
    Z(t) = y + \dot{y} t_{go}+ {\bm{\phi}}_{{1}{M}}^{T}\mathbf{x}_{M} + Z_{T}(t), \quad
    Z_{T}(t) = K_T\mathscr{L}_{t_{go}}^{-1}\left[\tilde{a}_{T}(s)/ s^{2} \right]
\end{equation}
where $\tilde{a}_{T}(s)$ is the target acceleration in the Laplace domain, which is given by
\begin{equation} \label{eq:a_T}
    \tilde{a}_{T}(s) =   \left[\mathbf{c}_{T}^{T}\left({s}\mathbf{I}-\mathbf{A}_{T}\right)^{-{1}}\right]\mathbf{x}_{T}+G_T(s)\tilde{w}, \quad 
    G_{T}(s) = \mathbf{c}_T^T\left({s}\mathbf{I}-\mathbf{A}_T\right)^{-{1}}\mathbf{b}_T+d_T
\end{equation}
$G_{T}(s)$ is the target acceleration transfer function and $\tilde{w}(s)$ is the Laplace transform of the target acceleration command. 
Note that the first term of $\tilde{a}_{T}(s)$ is the response to the initial conditions, and the second is the response to the acceleration command. 
To improve readability, the full derivation of \eqref{eq:Z_solution_general} is presented in \ref{Appendix_Derivation}.
\begin{rem}
    From \eqref{eq:Z_solution_general} it is clear that, unlike we originally assumed, we do not need to know the target acceleration command, $w$, from the current time onward, and it is enough to know the target acceleration, $a_{T}$, from the current time onward, which is a typical assumption in optimal-control-based guidance laws.  
\end{rem}
For a constant target acceleration, which is the case considered in this paper, we obtain
\begin{equation} \label{step}
    a_{T}(t_{go})=\bar{a}_{T}\cdot \mathbbm{1}(t_{go}) \quad \Longrightarrow \quad 
    \tilde{a}_{T}(s) = \mathscr{L}_{t_{go}}\left[a_{T}(t_{go})\right]=\frac{\bar{a}_{T}}{s}
\end{equation}
where $\mathbbm{1}$ is the step function and $\bar{a}_{T}$ is the constant target acceleration value. Substituting \eqref{step} into \eqref{eq:Z_solution_general} we obtain
\begin{equation} \label{4}
Z_{T}(t)=
K_T\mathscr{L}_{t_{go}}^{-1}\left(\frac{\bar{a}_{T}}{s^3}\right)=\frac{K_T}{2} \bar{a}_{T} t_{go}^2
\end{equation}
Substituting \eqref{4} into \eqref{eq:Z_solution_general}, we obtain $Z(t)$ for arbitrary, zero, and first-order, strictly proper linear missile dynamics, respectively (see \ref{Appendix_Derivation})
 \begin{subequations} \label{Z_noM_0_1st}
 \begin{align} \label{Z_noM_0}
 &Z\left(t\right)=y+\dot{y}t_{go}+\frac{K_T}{2} \bar{a}_{T} t_{go}^2+\bm{\phi}_{{1}M}^T\mathbf{x}_M \\
 &Z_0\left(t\right)=y+\dot{y}t_{go}+\frac{K_T}{2} \bar{a}_{T} t_{go}^2 \\ \label{Z_noM_1}
 &Z_1\left(t\right)=y+\dot{y}t_{go}+\frac{K_T}{2}{\bar{a}}_T t_{go}^2-K_M\tau_M^2\psi(\theta)a_{M}
 \end{align} 
 \end{subequations}

\subsection{Difference Between BOGL and OGL Under Constant and Time-Varying Bounds} \label{proof_BOGL}
Refs. \cite{rusnak1990optimal, rusnak1991optimal} showed that, for constant hard acceleration command bounds and minimum-phase missile dynamics, the classical OGL remains optimal even in the presence of saturation; namely, the BOGL and the saturated OGL yield identical closed-loop behavior.
In the constant-bound case, saturation occurs from the beginning of the engagement, leaving no unsaturated interval in which the controller can compensate in advance for the future loss of maneuverability. Consequently, once the trajectory exits saturation, the BOGL and the saturated OGL evolve identically. This behavior follows from the fact that, for minimum-phase missile dynamics, the optimal unsaturated command is monotonically increasing in $t_{go}$, resulting in a single intersection with the constant acceleration command limit.

Importantly, this property does not generally hold when the acceleration command bound is time-varying. In such cases, multiple transitions between unsaturated and saturated arcs may occur (e.g., the unsaturated–saturated–unsaturated profile corresponding to flag=3 in \figref{Qualitatively Acceleration Profile}). The existence of an unsaturated interval prior to saturation enables the BOGL to anticipate the upcoming constraint activation and maneuver more aggressively before entering the saturated region. As a result, the BOGL can achieve substantially different behavior from the saturated OGL. 
However, similar to the constant-bound case, when no unsaturated interval exists prior to saturation (e.g., the scenario starts at flag=2 in \figref{Qualitatively Acceleration Profile}), the BOGL reduces to the saturated OGL.

Let us show that the BOGL generates larger acceleration command magnitudes than the OGL in the flag=3 region when the OGL initially remains unsaturated.
\begin{proof}
It follows from the optimal controller structures in \TWOeqsref{u_opt_free}{Bounded Optimal controller} that if the OGL is not saturated (i.e., flag=3), larger terminal ZEM values lead to larger acceleration commands. 
Therefore, we proceed to show that the absolute value of the terminal ZEM associated with the BOGL is greater than its OGL counterpart, i.e., $\abs{\bar{Z}(t_{f})}>\abs{Z(t_{f})}$.
The OGL terminal ZEM, $Z(t_{f})$, follows directly from \eqref{Ztf_1}, since the OGL derivation neglects saturation effects. The BOGL terminal ZEM, 
$\bar{Z}(t_{f})$, for the flag=3 case, is obtained using \eqref{3}.
Since saturation occurs in $(t_{go_{s_{1}}},t_{go_{s_{2}}})$, the absolute value of the optimal command is smaller than its unsaturated counterpart within this interval. This follows directly from the definition of the saturated region. Therefore, the following relation holds
\begin{equation} \label{eq_for_proof}
\abs{\int_{t_{go_{s_{1}}}}^{t_{go_{s_{2}}}} \varphi(\xi) u_{\max}(\xi) \, \dd\xi} < \abs{\alpha \bar{Z}(t_{f})\int_{t_{go_{s_{1}}}}^{t_{go_{s_{2}}}} \varphi^{2}(\xi) \, \dd\xi}    
\end{equation}
Without loss of generality, let us assume that $Z(t)>0$. Substituting \TWOeqsref{eq_for_proof}{integrals} into \eqref{3}, noting that $\text{sign}\left[ \bar{Z}(t_{f})\right]=\text{sign}\left[ Z(t)\right]$, and that for minimum-phase missiles $\varphi(\xi)\le 0, \forall \xi\ge 0$, we obtain
\begin{equation} \label{eq2_for_proof}
    \bar{Z}(t_{f}) > Z(t) - \alpha \bar{Z}(t_{f})\int_{0}^{t_{go}} \varphi^{2}(\xi) \, \dd\xi = Z(t) - \alpha \bar{Z}(t_{f})I_{2}(t_{go})
\end{equation}
Rearranging \eqref{eq2_for_proof} and substituting \eqref{Ztf_1} (with $Z(t_{f})$ instead of $\bar{Z}(t_{f})$ for OGL), we obtain
\begin{equation} \label{proof}
    \bar{Z}(t_{f})>\frac{Z(t)}{1+\alpha I_{2}(t_{go})}=Z(t_{f})
\end{equation}
which, through the optimal controller structures in \TWOeqsref{u_opt_free}{Bounded Optimal controller}, proves that the BOGL acceleration command magnitude exceeds that of the OGL in the initial unsaturated interval (i.e., flag=3).
\end{proof}

\subsection{The Existence of a Solution for Target Interception}\label{Sec:SolutionExistence}
The maximum correction the missile can perform from the current moment to the end of the scenario is achieved by commanding its acceleration command limit throughout the scenario. Although it is not the optimal strategy (since the optimal controller approached zero as $t_{go}\to 0$ and $\umax(t_{go})$ does not, as \eqref{lim_u} shows),
this case is worth discussing to derive a necessary condition for the existence of a solution to the problem with a small miss distance.
Substituting the acceleration limit as the acceleration command into \eqref{Z_dot}, integrating, and using the definitions in \eqref{integrals} yields
\begin{equation} \label{Existence}      
\bar{Z}(t_{f})=Z(t)+\sign{\left[Z(t)\right]}I_{1}(\tgo)
\end{equation}
That is, a necessary condition for the existence of a solution with a small miss distance is 
\begin{equation} \label{Existence_criteria}  
\abs{I_{1}(\tgo)}\ge\abs{Z(t)}\textcolor{black}{-m}
\end{equation}
where $m$ is the maximal miss allowed.
Otherwise, even a controller that stays saturated during the whole scenario, utilizing the missile's maneuver capability to its maximum, will not be sufficient, and the missile will miss the target.

\section{Algorithm Implementation} \label{Algorithm Implementation}

\subsection{Approximation of the Acceleration Command Bounds} \label{Finding the Parameters}
To implement the BOGLs, the coefficients $\{\tilde{b}_{i}\}_{0}^{n}$ such that $u_{\max}(t_{go})$ in \eqref{umax(tgo)}
is a good approximation of the real, altitude-dependent, absolute value of the missile's acceleration limit need to be found.
To this end, an unsaturated OGL (\eqsref{Unbounded Optimal Controllers - zero and First}) simulation is performed, and the real acceleration command limit is calculated as a function of the approximated time-to-go. That is, the OGL-guided missile's altitude is evaluated to assign the acceleration limit as a function of the approximated time-to-go. We then use the weighted least squares algorithm to approximate $u_{\max}(t_{go})$. The polynomial order is determined by the designer, as long as the condition in \eqref{umax_cond} is satisfied. For a third-order polynomial, this condition reduces to simply $\tilde{b}_{3}>0$, and for a fourth-order polynomial it reduces to $\tilde{b}_{3}>0$ and  $\tilde{b}_{4}>-\tilde{b}_{3} / (4 \tgo^{initial})$. From our experience, a fourth-order polynomial is sufficient to describe the acceleration command limit, and we only needed a third-order polynomial in our simulations in \secref{Simulations}.
As mentioned before, if the condition is not satisfied, one can choose a second-order polynomial, so \eqref{umax_cond} is satisfied automatically, and repeat the parameters estimation process. When choosing a parabolic polynomial approximation, $\tilde{b}_{2}$ must be positive, so the acceleration limit approximation captures the convex behavior of the real, exponential, altitude-dependent limits.
Since the accuracy of a second-order polynomial approximation might be insufficient, and since the most crucial portion of the approximation is in the saturated region, in the estimation process the samples in the saturation region (i.e., flag=2) can be weighted more than the other samples. The identification of these samples is also based on the initial unsaturated OGL simulation (i.e., the intersections between the OGL command and the actual acceleration limit). 
\begin{rem}
As shown in \eqref{umax_tgo_app}, $u_{max}(t_{go})$ is approximately exponential. Therefore, it is expected that its polynomial approximation would uphold the condition in \eqref{umax_cond}. This is also what we observed in all the simulations we performed in this study using third and fourth-order polynomials. The second-order option is provided as a backup.
\end{rem}

The estimation model for the coefficients $\{\tilde{b}_{i}\}_{0}^{n}$ is the classical linear model 
\begin{equation} 
    z(k) = \mathbf{h}^{T}(k)\mathbf{\tilde{b}}+\textcolor{black}{\nu}(k), \quad \mathbf{h}(k) = \left[ 1 \quad \tgo(k) \quad ... \quad \tgo^{n}(k) \right]^{T}, \quad \mathbf{\tilde{b}} = \left[ \tilde{b}_{0} \quad \tilde{b}_{1} \quad ... \quad \tilde{b}_{n} \right]^{T}
\end{equation}
where $\tgo(k)\triangleq t_{f}-t_{k}$ is the approximated time-to-go in discrete time, and $z(k),\textcolor{black}{\nu}(k)$ are the acceleration limit and the noise at $\tgo=\tgo(k)$, respectively. The weighted least squares estimator of $\mathbf{\tilde{b}}$ is \cite{mendel1995lessons}
\begin{equation}   \label{WLS_pinv} 
\mathbf{\hat{\tilde{b}}}=\left(\mathbf{Q}^{\frac{1}{2}} \mathbf{H}\right)^{\dagger}\mathbf{Q}^{\frac{1}{2}}\mathbf{z}, \quad \mathbf{\textcolor{black}{\Gamma}}^{\dagger}\triangleq\left(\mathbf{\textcolor{black}{\Gamma}}^T\mathbf{\textcolor{black}{\Gamma}}\right)^{-1}\mathbf{\textcolor{black}{\Gamma}}^T
\end{equation}
where $\mathbf{Q}^{\frac{1}{2}}\mathbf{Q}^{\frac{1}{2}}=\mathbf{Q}$, $\mathbf{\textcolor{black}{\Gamma}}^{\dagger}$ is the Moore-Penrose Pseudoinverse of $\mathbf{\textcolor{black}{\Gamma}}$, and
\begin{equation}
 \mathbf{z} = \left[ z(1) \quad z(2) \quad ... \quad z(N) \right]^{T}, \quad \mathbf{H} = \left[ \mathbf{h}(1) \quad \mathbf{h}(2) \quad ... \quad \mathbf{h}(N) \right]^{T}, \quad \mathbf{Q}=\text{diag}\left\{\mathbf{q}^{T}_{1} ,\mathbf{q}^{T}_{2},\mathbf{q}^{T}_{3}\right\}    
\end{equation}
where $N$ is the number of measurements, $\mathbf{Q}$ is the weight matrix, and $\mathbf{q}_{i}$ are identity vectors of appropriate order multiplied by a tuning parameter $q_{i}$, for each one of the regions of the $\text{flag}(t_{go})$ function. 
\subsection{The Switching Times-to-Go and the Terminal ZEM} \label{Algo_SwitchingTimes}
We obtained three different expressions for $\bar{Z}(t_{f})$, one for each possible value of $\text{flag}(t_{go})$. The value of $\bar{Z}(t_{f})$ is then substituted into the optimal controller structure in \eqsref{Bounded Optimal Controllers - zero and 1st}. However, to calculate $\bar{Z}(t_{f})$, the values of $\text{flag}(t_{go})$ and the switching times-to-go must be computed.
In this section, we propose a general algorithm for finding the switching times-to-go and $\bar{Z}(t_{f})$. It can be applied to any order acceleration command limit polynomial and both dynamic models. For brevity, it is only presented here for the first-order, strictly proper missile dynamics. As will be presented further on, the proposed algorithm requires iterations to converge, and is numerical in nature. To initialize the iterative process, we use results from a zero-order missile and a parabolic acceleration command limit. This combination offers an analytical solution for the switching times-to-go as a function of $\bar{Z}(t_{f})$, simplifying the algorithm and improving convergence time. For better readability, the initialization algorithm is presented in \ref{Appendix_Algo}.

Normalizing \eqref{umax(tgo)} and substituting it together with the unsaturated optimal controller into \eqref{l_def} yields
\begin{equation}\label{l}
\mathcal{\ell} \left( \theta \right) = 
\sum_{i=0}^{n} b_{i} \theta^{i} - \left|\bar{\mathcal{K}}_{1}\right|\psi \left( \theta \right), \quad b_{i} = \tilde{b}_{i} \tau_{M}^{i}, \quad \forall i\in\left\{0,1,...,n\right\}, \quad \theta \dfn \frac{t_{go}}{\tau_{M}}
\end{equation}
The roots of $\ell(\theta)$ are the intersection points between the unsaturated optimal acceleration command and its normalized-time-to-go dependent limit, i.e., the normalized switching times-to-go.
Assuming $\bar{Z}(t_{f})$ is known (recalling the definition of $\bar{\mathcal{K}}_{1}$ in \eqref{Bounded Optimal Controller - 1st}), these roots can be computed.
Since $\psi(\theta)$ is a transcendental function, analytical solutions cannot be found, and the normalized switching times-to-go are found numerically.
The algorithm for finding the normalized switching times-to-go and $\bar{Z}(t_{f})$ is an iterative process, in which we initially guess the value of $\bar{Z}(t_{f})$ to calculate the roots of $\ell(\theta)$, and then use them to recompute the new value of $\bar{Z}(t_{f})$, to convergence. 
For the first time step, the initial guess for $\bar{Z}(t_{f})$ is the value resulting from the zero-order, parabolic acceleration limits (\algoref{Ztf0_algo} in \ref{Appendix_Algo}), and for the other time steps, the initial guess is the previous time step's value of $\bar{Z}(t_{f})$. The algorithm is presented in \algoref{Ztf1_algo}, and 
uses the Newton-Raphson  \cite{suli2003introduction} method with deflation 
to find all the roots of $\ell(\theta)$. The update equation of the Newton-Raphson algorithm is given by 
\begin{equation} \label{Newton-Raphson_eqn_l}
    \theta_{k+1} = \theta_{k} - \frac{\mathcal{\ell}(\theta_{k})}{\mathcal{\ell}'(\theta_{k})}, \quad 
    \mathcal{\ell}'(\theta)=  \sum_{i=1}^{n} i b_{i} \theta^{i-1} + \left|\bar{\mathcal{K}}_{1}\right| \left({\mathrm e}^{-\theta}-1 \right) 
\end{equation}
and the full algorithm of finding all the roots with deflation is presented in \algoref{NR_deflation}.
We denote $\text{NR}(H,H')$ as the result of the Newton-Raphson algorithm for the function $H$ with a derivative of $H'$, where the term $\bar{Z}(t_{f})$ in $\ell(\theta)$ and $\ell'(\theta)$ is the latest one calculated, and $\varepsilon$ is a small number.
The terms $\Delta \theta_{s_{1}},\Delta \theta_{s_{2}}$ are the differences between the first and the second normalized switching times-to-go between two consecutive iterations, respectively, and $\theta_{tol}$ is the convergence termination criteria on $\Delta \theta_{s_{1}},\Delta \theta_{s_{2}}$ between iterations.

\begin{algorithm}[hbt!]
\caption{Calculating $\bar{Z}(t_{f})$ for First-Order Missile Dynamics with an Arbitrary-Order Polynomial Acceleration Limit.}
\label{Ztf1_algo}
    \begin{algorithmic}[1]
        \State Compute $\bar{Z}(t_{f})$ assuming flag=1 (\eqref{Ztf_1}).
        \State Compute the roots of $\ell(\theta)$ with the $\bar{Z}(t_{f})$ of flag=1, $\Theta=\text{NR\_all}\left[\ell(\theta)\right]$ (\algoref{NR_deflation}).       
        \If{No roots or $\Theta_{1}>\theta$}
            \State The missile is in the flag=1 region.
            \State Return $\bar{Z}(t_{f})$.
        \Else
        \If {first time step}
        \State \parbox[t]{\dimexpr\linewidth-1\algorithmicindent} {Use $\bar{Z}(t_{f})$ from \algoref{Ztf0_algo} to compute $\tgo{_{s_{1}}},\tgo{_{s_{2}}}$, and use them to compute $\bar{Z}(t_{f})$ as an initial guess according  to the flag value (\THREEeqsref{Ztf_3}{Ztf_2}{Ztf_1}).} 
        \Else
        \State Use $\bar{Z}(t_{f})$ from the last time step as an initial guess.
        \EndIf        
        \State Compute the roots of $\ell(\theta)$ with the last updated $\bar{Z}(t_{f})$, $\Theta=\text{NR\_all}\left[\ell(\theta)\right]$ (\algoref{NR_deflation}).       
        \If{No roots or $\Theta_{1}>\theta$}
        \State Compute $\bar{Z}(t_{f})$ for flag=1 (\eqref{Ztf_1}).
        \ElsIf{$\Theta_{1}<\theta<\Theta_{2}$}
        \State Compute $\bar{Z}(t_{f})$ for flag=2 with $\theta_{s_{1}}=\Theta_{1}$ (\eqref{Ztf_2}).
        \Else
        \State Compute $\bar{Z}(t_{f})$ for flag=3 with $\theta_{s_{1}}=\Theta_{1},\theta_{s_{2}}=\Theta_{2}$ (\eqref{Ztf_3}).
        \EndIf
        \State Recompute the normalized switching times-to-go and $\bar{Z}(t_{f})$ according to lines 12-19.
        \While {$\left|\Delta \theta_{s_{1}} \right|,\left|\Delta \theta_{s_{2}} \right|>\theta_{tol}$}
        \State Recompute the normalized switching times-to-go and $\bar{Z}(t_{f})$ according to lines 12-19.
        \EndWhile
        \State Return $\bar{Z}(t_{f})$.
    \EndIf
    \end{algorithmic}
\end{algorithm}

\begin{algorithm}[hbt!]
\caption{Newton-Raphson for Finding All Roots of $H(\theta)$ with Deflation.}\label{NR_deflation}
    \begin{algorithmic}[1]
        \State Compute a root using $\Theta_{1}=\text{NR}\left[H_{1}(\theta),H'_{1}(\theta)\right]$.
        \While {Latest $\Theta_i$ is not NaN}
            \State Define $H_{i+1}(\theta)=\frac{H_{i}(\theta)}{\theta-\Theta_{i}}$.
            \State Define $H'_{i+1}(\theta)=\frac{H'_{i}(\theta)\left(\theta-\Theta_{i}\right)-H_{i}(\theta)}{\left( \theta-\Theta_{i}\right)^{2}}$.
            \State Compute a new root using $\Theta_{i+1}=\text{NR}\left[H_{i+1}(\theta),H'_{i+1}(\theta)\right]$.           
        \EndWhile
        \State Remove the NaN value, and then sort the set $\Theta$ from the minimal value to the maximal value.
        \State Return all $\Theta_{i}$ for which $\sign\left[{H_{1}(\Theta_{i}+\varepsilon})\right]\neq\sign\left[{H_{1}(\Theta_{i}-\varepsilon})\right]$
    \end{algorithmic}
\end{algorithm}
\begin{rem}
An alternative numerical algorithm for calculating the roots of $\ell(\theta)$ that was also considered is based on approximating the exponential term in $\ell(\theta)$. For relatively ``large'' normalized switching times-to-go (e.g., $\theta>5$) , the exponential term in $\ell(\theta)$ is very small. Neglecting this term, $\ell(\theta)$ becomes a polynomial whose roots are much simpler to find, compared to the Newton-Rhapson-based root-finding method discussed in this section. They can be calculated by a non-iterative algorithm that finds them all (like the Matlab \textit{roots} function, which computes the eigenvalues of the corresponding companion matrix). For ``small'' normalized switching times-to-go, the exponential terms can be approximated by a Taylor series, and $\ell(\theta)$ becomes a polynomial again.\\ The results presented in \secref{Simulations} were generated with the full algorithms presented in this section. However, the alternative approach was also evaluated, and very minor differences were obtained when using the approximations discussed in this remark.
\end{rem}

\subsection{Approximation of the ZEM} 
As presented before, $Z(t)$ in \eqsref{Z_noM_0_1st} must be computed to obtain the optimal controller. To this end, we need to approximate $y+\dot{y}t_{go}$ from the measurements available to the missile in a nonlinear setting (simulation or real-world measurements). Assuming small deviations from the collision triangle, we obtain
\begin{equation} \label{phi_approx}
\sigma-\sigma(0)=\arcsin{\left(\frac{y}{r}\right)}\approx\frac{y}{r}, \quad  \Longrightarrow \quad \dot{\sigma}=\frac{1}{v_ct_{go}^2}\left(y+\dot{y}t_{go}\right) \quad
    \Longrightarrow \quad y+\dot{y}t_{go}=\dot{\sigma}v_ct_{go}^2
\end{equation}
This implementation is used in all guidance laws simulations.

\section{Simulations} \label{Simulations}
This section presents nonlinear simulations demonstrating the proposed guidance laws. The first subsection compares the first-order BOGL with its corresponding OGL (i.e., the same dynamic model without explicitly accounting for saturation constraints in the derivation), and discusses the resulting differences in behavior and performance.
The second subsection compares the zero-order BOGL with a modified version of the guidance law derived in \cite{ben2003new}. To ensure a fair comparison, both guidance laws are implemented using the same zero-order missile dynamic model. The modified guidance law employs an exponential time-to-go weighting of the control-effort running cost and is referred to in this paper as ``New Augmented Proportional Navigation'' (NAPN), rather than ``New Proportional Navigation'' as in \cite{ben2003new}, because future target acceleration was additionally incorporated, to enable a fair comparison with the proposed BOGL. In all simulations, the missile and the target fly at constant speeds, and the weighting parameter is $\alpha=10^{6} \ (1/s^3)$ (for all guidance laws: BOGL, OGL, NAPN). In addition, to simulate a more realistic scenario, a blind range of $100 (m)$ was used. Within the blind range, the acceleration command is set to the last one calculated before entering the blind range.
The ODEs were integrated using the MATLAB \textit{ode45} function, with a controller frequency of 100 Hz.
In all simulations, the missile's acceleration command bound is a realistic, altitude-dependent limitation. 
The altitude-dependent missile's acceleration command limit for all guidance laws is similar to the one in \cite{taub2013intercept}, and is given by \eqref{u_max_A_beta}, with $A=250 \ (m/s^2)$ and $\beta=9.2\times10^{-5} \ (1/m)$.

\subsection{The First-Order BOGL Performance and Comparison to the OGL}
To simulate a more realistic scenario, linear, first-order, strictly proper dynamics are used for both the missile and the target. Therefore, the first-order BOGL and OGL are applied to the missile.
The scenario parameters are shown in \tableref{Parameters}, and the target acceleration is $-5g$ (i.e., downward).
\begin{table}[hbt!]
\caption{\label{Parameters} Scenarios parameters.}
\centering
\begin{tabular}{lccccc}
\hline\hline
& $v\left( \frac{m}{s}\right)$ & $\tau(s)$ & $\gamma(0)(deg)$ & $x(0)(km)$& $y(0)(km)$\\\hline
Missile& 1800& 0.2& 69& 0& 0\\
Target& 1500& 0.2& -20& 60& 45\\
\hline\hline
\end{tabular}
\end{table}
In the BOGL implementation, the acceleration limits are approximated as a third-order polynomial of the approximated time-to-go (\secref{Finding the Parameters}). The weighting parameters for the polynomial coefficients estimation are  $q_{1}=q_{2}=q_{3}=1$, i.e., the classical least squares estimator. The guidance law used to generate the trajectory for estimating the polynomial coefficients is a first-order OGL (\eqref{Unbounded Optimal Controller - 1st}).  

Figure \ref{Trajectories_SATalt_1stOrder} presents the trajectories of the BOGL and the OGL-guided missiles, along with the target trajectory.
It also presents both guidance laws' control efforts and miss distances. It is evident that the differences in control effort between the guidance laws are negligible. However, as can be seen in the zoomed-in view at the end of the scenario in the bottom-right corner, the
OGL misses the target by hundreds of meters, while the BOGL hits it perfectly.
It should be noted that the total cost function of the BOGL is orders of magnitude lower than that of the OGL, due to large differences in miss distance.

The reason for the miss distance difference can be seen in \figref{AccCommand_SATalt_1stOrder}. It is evident that the OGL is saturated from $t_{go}\approx 17.5(s)$ until the end of the scenario. The divergence of the yellow dotted line, which is the OGL's unsaturated acceleration command, towards the end of the scenario, is a testament to the lack of maneuverability of the OGL-guided missile, which led to a miss. On the other hand, the BOGL-guided missile accounts for the predicted saturation. It uses larger acceleration commands (in absolute value) before entering the saturation region, enabling it to exit the saturation region before the scenario ends and resulting in a negligible miss distance. It should be noted that the BOGL enters the saturation region earlier than the OGL, to exploit more of its capabilities.
Moreover, it is evident in \figref{AccCommand_SATalt_1stOrder} that the time-to-go dependent acceleration limits are very close to the real, altitude-dependent ones. In Addition, the BOGL's real acceleration limit is very close to the OGL's one, which justifies the use of an OGL simulation to estimate the BOGL time-to-go approximated acceleration limit.

\begin{figure}[hbt!]
\centering
\includegraphics[width=0.57\textwidth]{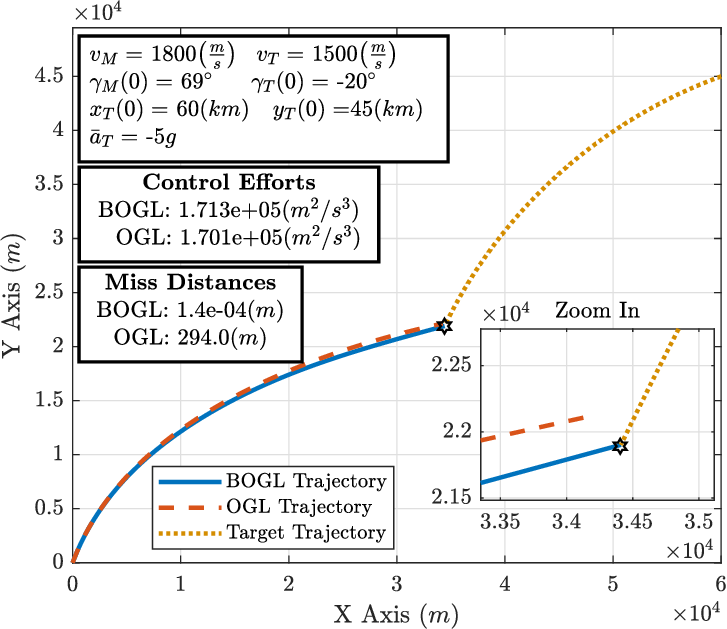}
\caption{\label{Trajectories_SATalt_1stOrder} Missiles and target trajectories.}
\end{figure}

\begin{figure}[hbt!]
\centering
\includegraphics[width=0.57\textwidth]{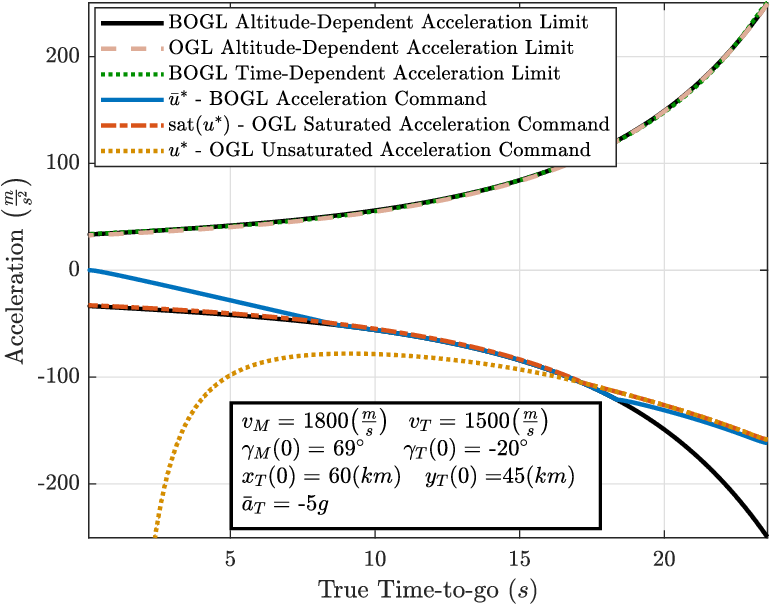}
\caption{\label{AccCommand_SATalt_1stOrder}  Acceleration commands.}
\end{figure}

In \figref{Ztf_SATalt_1stOrder}, the BOGL's $\bar{Z}(t_{f})$, i.e., the expected miss distance during the scenario, is presented along with its OGL counterpart, $Z(t_{f})$. It is evident that the saturated OGL's expected miss distance is smaller (in absolute value) than its BOGL's counterpart at the beginning of the scenario, as the OGL is not aware of the upcoming saturation, and, therefore, predicts a smaller miss distance, resulting in a more moderate maneuver (consistent with the discussion in \secref{proof_BOGL}). However, from the moment its acceleration command curve intersects the acceleration limits, the expected miss distance of the saturated OGL diverges.
This occurs because the OGL does not account for saturation, and, therefore, its expected miss distance calculation is incorrect. Moreover, it does not have enough maneuverability at that stage of the scenario, and, therefore, the expected miss distance continues to grow (in absolute value), leading to a miss. Conversely, it is evident that the BOGL's expected miss distance, i.e. $\bar{Z}(t_{f})$, is small during the whole scenario, even when the missile enters saturation, which implies that the saturation is predicted successfully by the guidance law.
In \figref{ts_SATalt_1stOrder}, the history of the BOGL's predicted switching times-to-go are presented. They do not change much during the scenario, a testament to the good match between the approximated and the actual acceleration limits. Moreover, the black circle in \figref{ts_SATalt_1stOrder} represents the first moment the solutions for the switching times-to-go became complex. This happens towards the end of the scenario, when the absolute value of $\bar{Z}(t_{f})$ decreases (see \figref{Ztf_SATalt_1stOrder}). This results in a smaller ``slope'' of the optimal unsaturated controller, so the unsaturated optimal acceleration command curve does not intersect its bound.
An important point to clarify is that for both of the predicted switching times-to-go, the moments at which they intercept the purple dashed approximated $t_{go}$ line are the moments the missile enters (for $t_{go_{s_{2}}}$) and exits (for $t_{go_{s_{1}}}$) the saturation region. From these moments onward, the value of the corresponding switching time-to-go becomes irrelevant to the guidance law (which can also be seen in \eqref{Ztf_2} and in \eqref{Ztf_1}).

\begin{figure}[hbt!]
\centering
\includegraphics[width=0.57\textwidth]{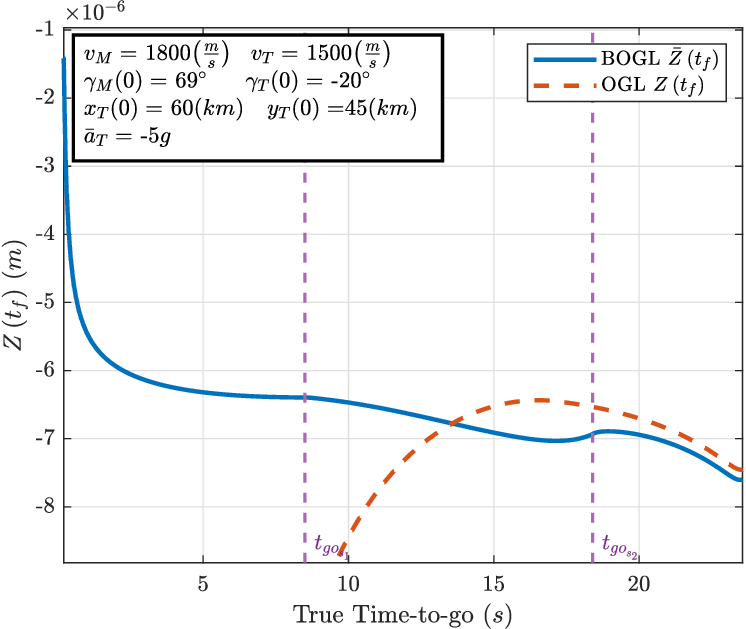}
\caption{\label{Ztf_SATalt_1stOrder} Expected miss distance time history.}
\end{figure}
\begin{figure}[hbt!]
\centering
\includegraphics[width=0.57\textwidth]{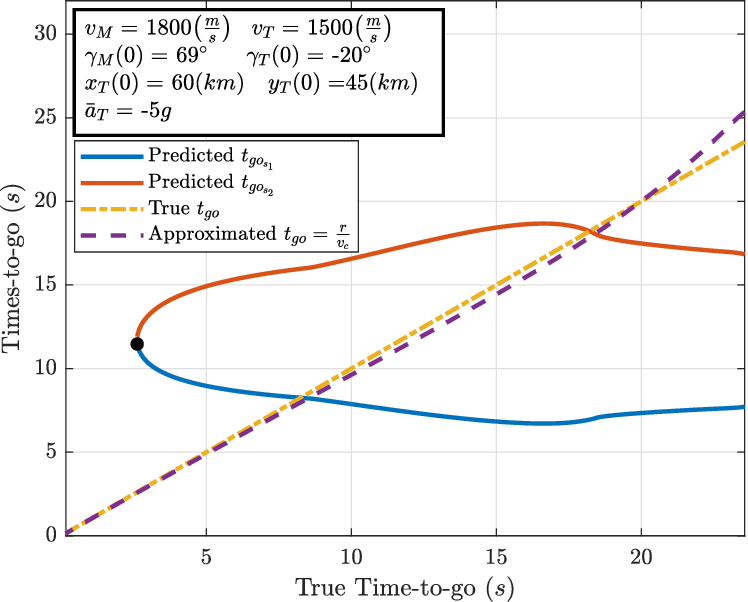}
\caption{\label{ts_SATalt_1stOrder} Expected switching times-to-go histories.}
\end{figure}

In \figref{Miss_distance_Comparison}, capture envelopes are presented for both the BOGL and the OGL, where the criterion of the miss was one meter. Every point on the envelope's boundary represents the simulation conditions for which the missile misses the target by the miss criterion. The horizontal axis is the target's initial altitude, and the vertical axis is its initial flight-path angle. The regions below the curves (i.e., smaller absolute value of the initial target flight-path angle) are the regions in which the miss distance is less than the miss criterion. We call these regions the ``hit regions''. The regions above the curves are the regions in which the miss distance is larger than the miss criterion. We call these regions the ``miss regions''. The solid curves are the BOGL's envelope boundaries, the dashed lines are the OGL's ones, and each color represents a different value of the target maneuver. It is evident that increasing the target acceleration (in absolute value) reduces the hit region. Moreover, for a given target maneuver, the hit region of the BOGL is much greater than its OGL counterpart, which demonstrates the utility of the BOGL.

\begin{figure}[hbt!]
\centering
\includegraphics[width=0.57\textwidth]{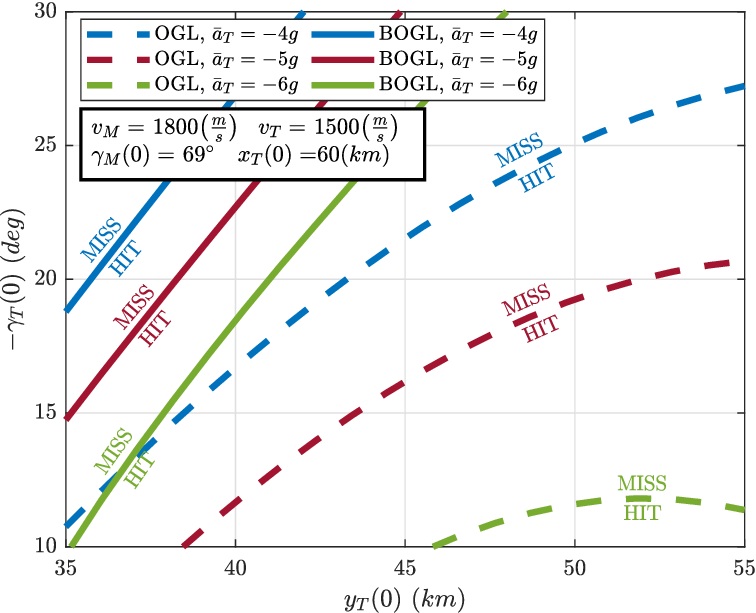}
\caption{\label{Miss_distance_Comparison} Capture envelopes of the BOGL and the OGL.}
\end{figure}

\subsection{Comparison Between the BOGL and the NAPN Guidance Law}
As mentioned in the introduction, the guidance law proposed in \citet{ben2003new} employs an exponential time-to-go weighting of the control-effort running cost and, therefore, tends to command larger accelerations earlier in the engagement, when the missile possesses greater maneuverability. However, unlike the BOGL, the resulting acceleration profile depends on a tuning parameter, $k$, which determines the rate of increase of the weighting function. Similar to the BOGL, this guidance law can successfully intercept the target even in scenarios where the classical OGL fails due to saturation during flight, although appropriate tuning of $k$ is required.
To ensure a fair comparison between the two approaches, the zero-order BOGL is considered, since the guidance law in \cite{ben2003new} was developed for ideal missile dynamics. Accordingly, both guidance laws are simulated using a zero-order dynamic model.
Furthermore, the guidance law in \cite{ben2003new} was originally derived for a non-maneuvering target, although the authors noted that it can be extended to maneuvering-target scenarios. Since the BOGL explicitly accounts for target maneuvers in its derivation, a modified version of the guidance law in \cite{ben2003new} was derived and implemented to incorporate future target acceleration (and, as mentioned earlier, is referred to as NAPN). 

In the BOGL implementation, a parabolic time-to-go approximation was used to estimate the acceleration limitation, and the switching times were computed using the analytically based algorithm presented in \ref{Appendix_Algo} (\algoref{Ztf0_algo}). In the least-squares estimation (\secref{Finding the Parameters}), samples belonging to the saturated region were weighted by a factor of $10^{4}$, to improve the approximation accuracy within the saturation region.
The simulations were performed for both guidance laws using the same parameters presented in \tableref{Parameters} (except for $\tau_{M}$, which is not defined for zero-order dynamics).

Figure \ref{BenAsherComparison_Acceleration} presents the acceleration command histories of the NAPN guidance law for several values of $k$, along with their BOGL counterpart. It can be observed that increasing the value of $k$ results in larger acceleration commands (in absolute value) during the initial phase of the engagement. The corresponding miss distances, control-effort values, and overall cost-function values are summarized in \tableref{AccelerationCommandTable}.
The results indicate that the smallest simulated value of $k$ is insufficient to achieve interception. In this case, the missile does not command sufficiently large accelerations early in the engagement to compensate for the upcoming saturation and, therefore, remains saturated during the terminal phase of the scenario, as reflected by the non-vanishing acceleration command at the end of the engagement. The two larger values of $k$ enable successful interception; however, they require greater control effort than the BOGL and consequently yield larger cost-function values.

\begin{figure}[hbt!]
\centering
\includegraphics[width=0.57\textwidth]{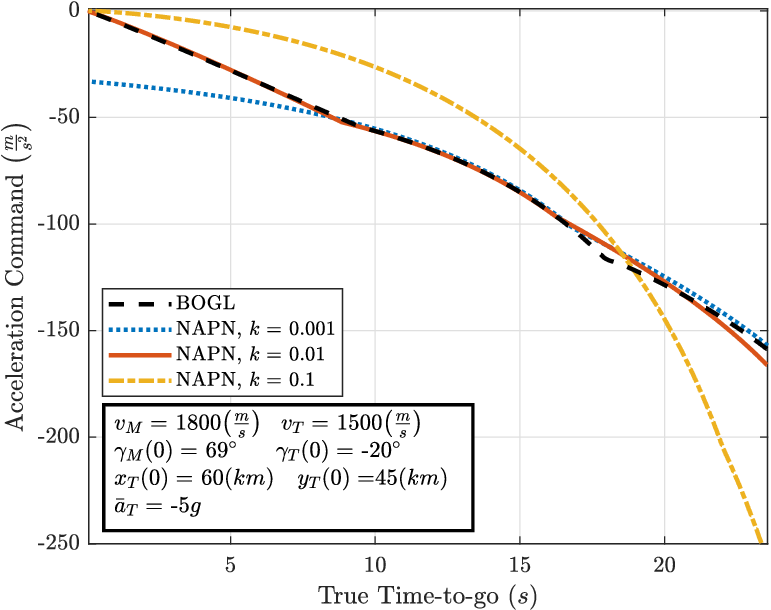}
\caption{\label{BenAsherComparison_Acceleration} Comparison of BOGL and NAPN - acceleration commands.}
\end{figure}

\begin{table}[hbt!]
\centering
\caption{\label{AccelerationCommandTable} Comparison of BOGL and NAPN - sample runs.}
\begin{tabular}{@{}lcccc@{}}
\hline\hline
 & BOGL & NAPN, $k=0.001 \left( \frac{1}{s}\right)$ & NAPN, $k=0.01 \left( \frac{1}{s}\right)$ & NAPN, $k=0.1 \left( \frac{1}{s}\right)$ \\ \hline
Miss $(m)$ & $5.13\times 10^{-5}$ & $119.82$ & $2.49\times 10^{-5}$ & $1.34\times 10^{-4}$ \\
Control Effort $\left(\frac{m^{2}}{s^{3}}\right)$ & $1.668\times 10^{5}$ & $1.679\times 10^{5}$ & $1.67\times 10^{5}$ & $1.923\times 10^{5}$ \\
Cost  $\left(\frac{m^{2}}{s^{3}}\right)$ & $8.343\times 10^{4}$ & $5.999\times 10^{7}$ & $8.351\times 10^{4}$ & $9.622\times 10^{4}$ \\ \hline\hline
\end{tabular}
\end{table}
Figure \ref{BenAsherComparison_CE_gammaM0} presents the NAPN control effort as a function of the weighting parameter $k$. Each circle corresponds to a single simulation run for a specific value of $k$, using the simulation parameters listed in \tableref{Parameters} (excluding $\tau_{M}$). Different colors correspond to different missile launch angles, $\gamma_{M}(0)$. As the launch angle increases, the initial heading error increases as well, and the engagement becomes more challenging for the missile.
A total of 100 logarithmically spaced values of $k$ were simulated, ranging from $10^{-3}(1/s)$ to $10^{-1}(1/s)$. Hollow circles denote missed engagements, whereas filled circles denote successful interceptions, where the miss criterion was taken as one meter.

It can be observed that the scenario with $\gamma_{M}(0)=66^{\circ}$ is relatively easy, as the NAPN-guided missile successfully intercepts the target across all simulated values of $k$, although larger $k$ values require greater control effort. However, for larger launch angles, values of $k$ that are too small lead to a target miss, whereas values that are too large result in excessive control effort. Furthermore, the minimum value of $k$ required for successful interception increases with the launch angle.
Interestingly, the first successful engagements (i.e., the first filled circles) often exhibit lower control effort than the final unsuccessful engagements (i.e., the last hollow circles). This behavior occurs because in successful engagements, the missile exits saturation near the terminal phase of the engagement, enabling successful interception while reducing the accumulated control effort during the terminal portion of the trajectory. The same phenomenon can also be observed in \tableref{AccelerationCommandTable}, where the NAPN with $k=10^{-3}(1/s)$ yields a larger control effort than the case with $k=10^{-2}(1/s)$, since the latter exits saturation near the end of the engagement, as shown in \figref{BenAsherComparison_Acceleration}.

The dashed lines in \figref{BenAsherComparison_CE_gammaM0} represent the BOGL's control effort for the corresponding launch angles. The BOGL hits the target in all scenarios, and it is evident that it provides a lower bound on the control effort across all successful NAPN engagements at every considered launch angle. In other words, no value of $k$ yields successful interception with lower control effort than the BOGL.
Moreover, the optimal value of $k$, in terms of minimizing the control effort, depends on the launch angle and, therefore, on the engagement scenario. This demonstrates the sensitivity of the NAPN tuning process to scenario parameters and highlights a major practical advantage of the BOGL, namely that it does not require iterative parameter tuning.

\begin{figure}[hbt!]
\centering
\includegraphics[width=0.57\textwidth]{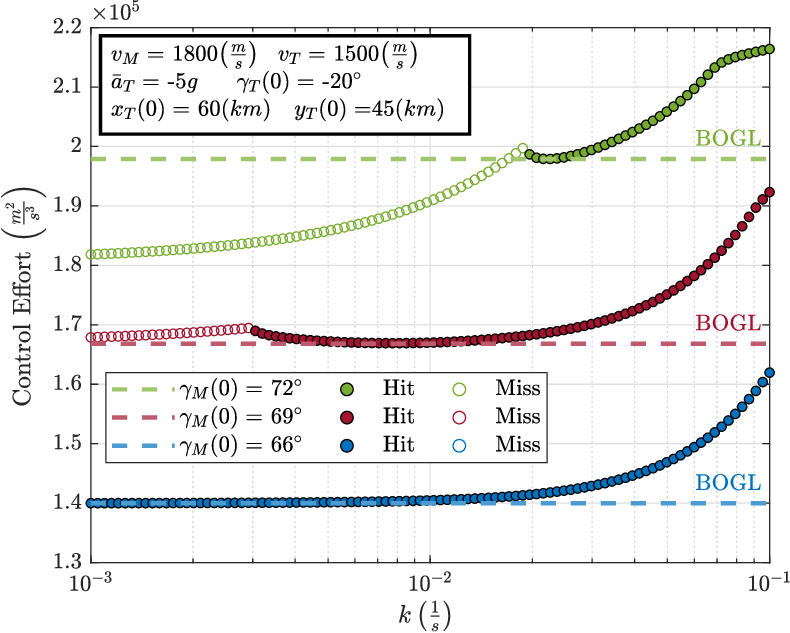}
\caption{\label{BenAsherComparison_CE_gammaM0} Comparison of BOGL and NAPN - control effort.}
\end{figure}

\section{Conclusions} \label{Conclusion}
In this paper, bounded optimal guidance laws for ground-to-air missiles were derived. The guidance problem was formulated as a linear-quadratic optimal-control problem with a bounded controller, considering linear zero-order or first-order strictly proper missile dynamics, together with arbitrary-order linear target dynamics. A hard, time-varying acceleration command constraint was explicitly incorporated into the derivation, thereby forcing the missile to reshape its acceleration profile to meet the saturation limits. This capability stems from the BOGLs’ ability to predict upcoming saturation and maneuver more aggressively while sufficient maneuverability is still available.
Nonlinear simulation results were presented for the first-order BOGL and compared with the corresponding unbounded OGL. The results demonstrated that, in scenarios where the OGL became saturated and failed to intercept the target, the BOGL successfully reshaped its acceleration command and achieved interception even when saturation occurred during a substantial portion of the engagement.
Furthermore, the zero-order BOGL was compared with an equivalent state-of-the-art softly constrained guidance law. The results showed that the softly constrained approach required scenario-dependent tuning, whereas the BOGL provided a lower bound on the control effort among all successful engagements obtained using the softly constrained guidance law, without requiring any tuning, for all considered scenarios.

\appendix
\renewcommand{\thesection}{Appendix \Alph{section}}
\section{Transition Matrix, Reduced Order Dynamics, ZEM, and Integrals} \label{Appendix_Derivation}

\subsection{Transition Matrix - General Order Dynamics} \label{Terms Trasition}
To compute the optimal controller, we need to calculate $\mathbf{D}\mathbf{\Phi}$ (recall \THREEeqsref{Z_dot}{u_opt_free}{Bounded Optimal controller}), which can generally be written as
\begin{equation} \label{DPhi_matrix_general_appx}
      \mathbf{D}\mathbf{\Phi} =
      \left[{\phi}_{11} \quad {\phi}_{12} \quad \bm{\phi}_{1T}^{T} \quad \bm{\phi}_{1M}^{T} \right]
\end{equation}
Solving \eqref{Phi_ode} for each element of $\mathbf{D}\mathbf{\Phi}$ we obtain

\begin{subequations} \label{eq:DPhi_appx}
\begin{align}
     \frac{\dd\phi_{11}}{\dd t_{go}}&=0, \quad \phi_{11}(t_{go}=0)=1 \quad
 \Longrightarrow \quad \phi_{11} = 1 \\ \label{phi_12}
 \frac{\dd\phi_{12}}{\dd t_{go}}&=\phi_{11}=1,
  \quad \phi_{12}(t_{go}=0)=0 \quad \Longrightarrow \quad \phi_{12} =  t_{go} \\\label{d_phi_1T_tgo}
  \frac{\dd \bm{\phi}_{1T}^{T}}{\dd t_{go}}&=\phi_{12} K_{T} \mathbf{c}_{T}^{T}+\bm{\phi}_{1T}^{T} \mathbf{A}_T, \quad \bm{\phi}_{1T}^{T}(t_{go}=0) = \mathbf{0}^{T} \\ \label{d_phi_1M_tgo}
  \frac{\dd \bm{\phi}_{1M}^{T}}{\dd t_{go}}&=-\phi_{12} K_{M} \mathbf{c}_{M}^{T}+\bm{\phi}_{1M}^{T} \mathbf{A}_M , \quad \bm{\phi}_{1M}^{T}(t_{go}=0) = \mathbf{0}^{T}
\end{align}
\end{subequations}
Substituting \eqref{phi_12} into \eqsref{d_phi_1T_tgo} and (\ref{d_phi_1M_tgo}), applying the Laplace transform in the $t_{go}$ domain, rearranging, and taking the inverse transform, yields
\begin{equation}\label{phi_TM_App}
      \bm{\phi}_{1T}^{T} (t_{go})=K_{T} \mathscr{L}_{t_{go}}^{-1} \left[ \frac{1}{s^{2}} \mathbf{c}_{T}^{T}\left(s \mathbf{I}-\mathbf{A}_{T} \right)^{-1} \right], \quad
  \bm{\phi}_{1M}^{T} (t_{go})=-K_{M} \mathscr{L}_{t_{go}}^{-1} \left[ \frac{1}{s^{2}} \mathbf{c}_{M}^{T}\left(s \mathbf{I}-\mathbf{A}_{M} \right)^{-1} \right]
\end{equation}
Substituting \TWOeqsref{eq:DPhi_appx}{phi_TM_App} into \eqref{DPhi_matrix_general_appx} yields \eqref{DPhi}.

\subsection{\texorpdfstring{Calculating $Z(t)$ and the Sign of $\bar{Z}(t_f)$}{Calculating Z(t)}}
\subsubsection{General Solution}
The definition of the ZEM in \eqref{Z} can be written as
\begin{equation} \label{eq:Z_App}
    Z(t) = Z_{g} + Z_{c} , \quad Z_{g}=\mathbf{D}\mathbf{\Phi}\mathbf{x}, \quad Z_{c}= \int_{t}^{t_{f}} \mathbf{D}\mathbf{\Phi} \mathbf{c} w \, \dd \tau 
\end{equation}
Substituting \eqref{DPhi} into \eqref{eq:Z_App} yields
\begin{equation} \label{eq:Zg_App}
Z_g=y+\dot{y}t_{go}+{\bm{\phi}}_{{1}{T}}^{T}\mathbf{x}_{T}+{\bm{\phi}}_{{1}{M}}^{T}\mathbf{x}_{M}
=y+\dot{y}t_{go}+{\bm{\phi}}_{{1}{M}}^{T}\mathbf{x}_{M}+K_T\mathscr{L}_{t_{go}}^{-1}\left[ \frac{1}{s^2}\mathbf{c}_{T}^{T}\left({s}\mathbf{I}-\mathbf{A}_{T}\right)^{-{1}}\right] \mathbf{x}_{T}\\
\end{equation}
Substituting \eqref{DPhi}
into \eqref{eq:Z_App} and using the definition of $G_{T}$ in \eqref{eq:a_T}, yields
\begin{equation} \label{eq:Zc_App}
\begin{split}
Z_c&=\int_{t}^{t_f} \left({\bm{\phi}}_{1T}^{T}{\mathbf{b}}_{T}+K_{T}t_{go} d_{T} \right) w(\tau)\    \dd\tau=\int_{t}^{t_f}\left\{K_T\mathscr{L}_{t_{go}}^{-1}\left[\frac{1}{s^2}\mathbf{c}_{T}^{T}\left({s}\mathbf{I}-\mathbf{A}_{T}\right)^{-1}\mathbf{b}_{T}\right]+K_T d_T t_{go}\right\}w\left(\tau\right)\dd\tau\\
&=\int_{t}^{t_f}{K_T\mathscr{L}_{t_{go}}^{-1}\left[\frac{1}{s^2}\mathbf{c}_T^T\left({s}\mathbf{I}-\mathbf{A}_T\right)^{-{1}}\mathbf{b}_T+\frac{d_T}{s^{2}}\right]w\left(\tau\right)\dd\tau}=K_T\mathscr{L}_{t_{go}}^{-1}\left[\frac{1}{s^2} G_{T}(s) \tilde{w}\left(s\right)\right]
\end{split}
\end{equation}
Substituting \eqsref{eq:Zg_App} and (\ref{eq:Zc_App}) into \eqref{eq:Z_App} and rearranging yields
\begin{equation} \label{eq:Z_general_final_App}
 Z(t) =y+\dot{y}t_{go}+{\bm{\phi}}_{{1}{M}}^{T}\mathbf{x}_{M}+Z_T(t), \quad
 Z_T(t)= K_T\mathscr{L}_{t_{go}}^{-1}\left[ \frac{1}{s^2}\mathbf{c}_{T}^{T}\left({s}\mathbf{I}-\mathbf{A}_{T}\right)^{-{1}} \mathbf{x}_{T}+\frac{1}{s^2} G_{T}(s) \tilde{w}\left(s\right)\right]
\end{equation}

\subsubsection{Zero-Order Missile Dynamics} \label{Optimal Control - Zero Order}
Substituting zero-order missile dynamics into \eqref{phi_TM_App} we obtain
\begin{equation}\label{parameters_0}
    a_{M}=u, \quad \mathbf{A}_{M}=0, \quad \mathbf{b}_{M}=0, \quad \mathbf{c}_{M}^{T}=0, \quad d_{M}=1  \quad \Longrightarrow \quad {\bm{\phi}}_{{1}{M}}^{T}\left(t_{go}\right)=0 
\end{equation}
Substituting \eqref{parameters_0} into \eqref{eq:Z_general_final_App} yields
\begin{equation} \label{eq:Z_0_App}
    Z_0\left(t\right)=y+\dot{y}t_{go}+Z_T(t)
\end{equation}
\subsubsection{First-Order Missile Dynamics} \label{Optimal Control - 1st Order}
Substituting first-order missile dynamics into \eqref{phi_TM_App} we obtain
\begin{equation}\label{parameters_1st}
    a_{M}=x_{M}, \quad \mathbf{A}_{M}=-1/\tau_{M}, \quad \mathbf{b}_{M}=1/\tau_{M}, \quad \mathbf{c}_{M}^{T}=1, \quad d_{M}=0 \quad \Longrightarrow \quad {\bm{\phi}}_{{1}{M}}^{T}\left(t_{go}\right)= 
    -K_{M} \tau_{M}^{2}\psi(\theta)  
\end{equation}
Substituting \eqref{parameters_1st} into \eqref{eq:Z_general_final_App} yields
\begin{equation} \label{phi_1M}
Z_1\left(t\right)=y+\dot{y}t_{go}-K_M\tau_M^2\psi(\theta)a_{M}+Z_T(t)
\end{equation}
\subsubsection{\texorpdfstring{The Sign of $\bar{Z}(t_f)$}{The Sign of Z(t)}}
Substituting the unbounded and the bounded optimal controller structures (\TWOeqsref{u_opt_free}{Bounded Optimal controller}, respectively) into \eqref{Z_dot} yields
\begin{subequations} \label{Z_dot_appendix}
\begin{align}
    &\dot{Z}=-\alpha Z(t_{f})\varphi^{2}\\ 
    &\dot{Z}=-\varphi u_{\max} \text{sat}\left[\frac{\alpha \bar{Z}(t_{f})\varphi}{u_{\max}} \right] = -\left| \varphi \right| u_{\max} \text{sat}\left[\frac{\alpha \bar{Z}(t_{f})\left| \varphi \right|}{u_{\max}} \right]
\end{align}
\end{subequations}
Substituting \eqref{Derivative} into \eqsref{Z_dot_appendix}, we obtain the following ODEs for $Z$ with respect to $t_{go}$
\begin{subequations}
\begin{align}
    &\frac{\dd Z}{\dd t_{go}}=\alpha Z(t_{f})\varphi^{2}, \quad Z(t_{go}=0)=Z(t_{f})\\ 
    & \frac{\dd Z}{\dd t_{go}}= \left| \varphi \right| u_{\max} \text{sat}\left[\frac{\alpha \bar{Z}(t_{f})\left| \varphi \right|}{u_{\max}} \right], \quad Z(t_{go}=0)=\bar{Z}(t_{f})
\end{align}
\end{subequations}
Since the signs of the derivatives are equal to the signs of the initial value at $t_{go}=0$, we obtain that the signs of the expected miss distances of the OGL and the BOGL are the same and equal to the sign of the ZEM, i.e., $\text{sign}\left[Z(t_{f})\right]=\text{sign}\left[\bar{Z}(t_{f})\right]=\text{sign}\left[ Z(t)\right]$.

\subsection{\texorpdfstring{Computing $\varphi$ and $I_{1},I_{2}$}{Computing vatphi, I1, I2}}

\subsubsection{Zero-Order Missile Dynamics}
Substituting the zero-order dynamics in \eqref{parameters_0} into \TWOeqsref{DPhi}{state space terms}, and substituting the result into 
\eqref{Z_dot} yields
 \begin{equation} \label{phi_ideal_app}
    \varphi_0 (t_{go})=
    \begin{bmatrix}
        1 & t_{go} & \bm{\phi}_{1T}^{T} & 0 
      \end{bmatrix} \begin{bmatrix}
        0 & -K_{M} & \left[ \bm{0}\right] & 0 
      \end{bmatrix}^{T}=-K_{M}t_{go}
\end{equation}
Substituting \eqref{phi_ideal_app} into \eqref {integrals} we get the following expressions for $I_{1}$ (with a parabolic acceleration limit) and $I_{2}$
\begin{align} 
I_{1}(t_{go}) =  -K_{M}t_{go}^{2}\left( \frac{\tilde{b}_{2}}{4}t_{go}^{2}+\frac{\tilde{b}_{1}}{3}t_{go}+\frac{\tilde{b}_{0}}{2}\right), \quad
I_{2}(t_{go}) = \frac{K_{M}^{2}}{3}t_{go}^{3}
\end{align}
\subsubsection{First-Order Missile Dynamics}
Substituting the first-order dynamics in \eqref{parameters_1st} into \TWOeqsref{DPhi}{state space terms} and substituting the result into \eqref{Z_dot} yields
 \begin{equation} \label{phi_first_app}
    \varphi_1 (\theta)=       \begin{bmatrix}
        1 & t_{go} & \bm{\phi}_{1T}^{T} & -K_{M}\tau_{M}^{2}\psi(\theta)  
      \end{bmatrix} \begin{bmatrix}
        0 & 0 & \left[ \bm{0}\right] & \frac{1}{\tau_{M}} 
      \end{bmatrix}^{T}=-K_{M}\tau_{M}\psi(\theta)
\end{equation}
Substituting \eqref{phi_first_app} into \eqref{integrals}, and changing variables, we obtain
\begin{equation}
    I_{1}(\theta)= -K_{M}\tau_{M}^{2} \int_{0}^{\theta} \left( \sum_{i=0}^{n} b_{i} \xi^{i}\right) \left( {\mathrm e}^{-\xi}+\xi-1\right) \, \dd\xi ,\quad I_2\left(\theta\right)=\tau_M^3K_M^2\left(-\frac{1}{2}{\mathrm e}^{-2\theta}-2\theta {\mathrm e}^{-\theta}+\frac{\theta^3}{3}-\theta^2+\theta+\frac{1}{2}\right)
\end{equation}
where $I_{1}$ can be solved analytically using integration by parts.
The solutions for second and third-order polynomials are
\begin{subequations} \label{I1_theta}
\begin{align} 
\begin{split}
&I_{1}\left( \theta \right) \big|_{n=2} = -K_{M}\tau_{M}^{2} \left[ \left(1-{\mathrm e}^{-\theta}-\theta+\frac{\theta^{2}}{2}\right) b_{0} + \left(1-\frac{\theta^{2}}{2}+\frac{\theta^{3}}{3}-\theta {\mathrm e}^{-\theta}-{\mathrm e}^{-\theta}\right) b_{1} \right] \\ 
& \qquad \qquad \quad -K_{M}\tau_{M}^{2} b_{2}  \left(2-{\mathrm e}^{-\theta} \theta^{2}-2 \theta {\mathrm e}^{-\theta}-2 {\mathrm e}^{-\theta}-\frac{\theta^{3}}{3}+\frac{\theta^{4}}{4}\right)
\end{split} \\
&I_{1}\left( \theta \right) \big|_{n=3} = I_{1}\left( \theta \right) \big|_{n=2} -K_{M}\tau_{M}^{2}  b_{3}  \left(6-\frac{\theta^{4}}{4}+\frac{\theta^{5}}{5}-{\mathrm e}^{-\theta} \theta^{3}-3 {\mathrm e}^{-\theta} \theta^{2}-6 \theta {\mathrm e}^{-\theta}-6 {\mathrm e}^{-\theta}\right)
\end{align}
\end{subequations}

\section{Sufficient Conditions for at Most Two Switching Times} \label{Appendix_Proof}
The derivation in \secref{Bounded Optimal Guidance Law} assumes a maximum of two intersection points between the absolute value of the unsaturated optimal acceleration command (i.e., $|\alpha \bar{Z}(t_{f}) \varphi|$) and its limit $u_{\max}$. 
Let us derive sufficient conditions to satisfy this assumption using Rolle’s Theorem and two of its Corollaries, which state:
\begin{theorem}[Rolle’s Theorem]
Let $f$ be a continuous function 
on $\left[a,b\right]$, a differentiable function on $\left(a,b\right)$, and $f(a)=f(b)$,
then there exists at least one $c$ in $\left(a,b\right)$ such that $f'(c)=0$.
\end{theorem}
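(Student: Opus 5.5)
The statement is Rolle's Theorem itself, so I will prove it directly from the extreme value theorem and Fermat's interior-extremum principle.

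\emph{Step 1 (extrema exist).} Since $f$ is continuous on the compact interval $\left[a,b\right]$, the extreme value theorem gives points $x_{m},x_{M}\in\left[a,b\right]$ with
\begin{equation*}
f(x_{m})=\min_{\left[a,b\right]} f, \qquad f(x_{M})=\max_{\left[a,b\right]} f .
\end{equation*}

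\emph{Step 2 (case split).} There are two cases.

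If $f(x_{m})=f(x_{M})$, then $f$ is constant on $\left[a,b\right]$. Hence $f'(c)=0$ for every $c\in(a,b)$, and any interior point will do.

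Otherwise at least one extremum differs from the common endpoint value $f(a)=f(b)$. That extremum is therefore attained at some $c\in(a,b)$. Say it is a maximum; the minimum case is symmetric, or follows by replacing $f$ with $-f$.

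\emph{Step 3 (Fermat at the interior maximum).} Because $c$ is interior and $f$ is differentiable there, $f'(c)$ exists and equals both one-sided limits of the difference quotient $\bigl(f(c+h)-f(c)\bigr)/h$. Since $f(c+h)\le f(c)$ for all small $h$:
\begin{itemize}
\item for $h>0$ the quotient is $\le0$, so $f'(c)\le0$;
\item for $h<0$ the quotient is $\ge0$, so $f'(c)\ge0$.
\end{itemize}
Therefore $f'(c)=0$.

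The only point needing care is Step 2. The hypothesis $f(a)=f(b)$ is exactly what guarantees that a non-constant $f$ reaches an extremum in the open interval, where differentiability is available. No step is a genuine obstacle.

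In the appendix this theorem will be used with its corollary that $f$ has at most one more root than $f'$. Applying that corollary three times links the root count of $\ell$ to $\ell'''=u_{\max}'''-\alpha|\bar{Z}(t_f)|\,|\varphi|'''\ge0$. That link is the intended downstream application, not part of this proof.
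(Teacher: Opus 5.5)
Your proof is correct and complete. It is the standard extreme-value-theorem plus Fermat argument, and Step~2 correctly uses $f(a)=f(b)$ to force a non-constant $f$ to attain its maximum or minimum at an interior point, where differentiability is available. The paper gives no proof of this statement. It quotes Rolle's theorem as a classical result for use in the appendix, so your argument supplies exactly what the paper takes for granted.

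Your closing aside lies outside the proof, but it misdescribes how the paper uses the theorem. Iterating Corollary~\ref{corollary1} three times from $\ell'''$ has two problems:
\begin{itemize}
\item It would require $\ell'''$ to have no roots in $I$. The conditions $u_{\max}'''\ge0$ and $\abs{\varphi}'''\le0$ give only $\ell'''\ge0$, which can vanish identically: zero-order missile dynamics with a parabolic bound give $\ell'''\equiv0$.
\item Even when it applies, it only bounds the number of roots of $\ell$ by three, not two.
\end{itemize}
The paper instead uses $\ell'''\ge0$ only to make $\ell''$ nondecreasing on $I$. It then argues case by case on the sign of $\ell'(0)$ and the sign pattern of $\ell''$, and uses $\ell(0)>0$ to obtain at most two roots.
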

\begin{corollary} \label{corollary1}
Let $f$ be a function that satisfies Rolle’s Theorem conditions, and $f'$ has no more than $n$ roots in $\left(a,b\right)$,
then $f$ has no more than $n+1$ roots in $\left[a,b\right]$.
\end{corollary}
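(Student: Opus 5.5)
We argue by contradiction. Suppose $f$ has at least $n+2$ distinct roots in $\left[a,b\right]$, and order them as
\begin{equation*}
a\le r_{1}<r_{2}<\dots<r_{n+2}\le b .
\end{equation*}
The plan is to apply Rolle's Theorem on each of the $n+1$ consecutive subintervals $\left[r_{k},r_{k+1}\right]$, $k=1,\dots,n+1$. Continuity on $\left[r_{k},r_{k+1}\right]$ and differentiability on $\left(r_{k},r_{k+1}\right)$ are inherited from the hypotheses on $\left[a,b\right]$ and $\left(a,b\right)$, since each subinterval is contained in them. The endpoint condition $f(r_{k})=f(r_{k+1})=0$ holds by construction. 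Hence for each $k$ there is a point $c_{k}\in\left(r_{k},r_{k+1}\right)$ with $f'(c_{k})=0$.

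The key step is counting. The open intervals $\left(r_{k},r_{k+1}\right)$ are pairwise disjoint, so the points $c_{1}<c_{2}<\dots<c_{n+1}$ are distinct. Each $c_{k}$ lies strictly between $r_{1}\ge a$ and $r_{n+2}\le b$, so all of them lie in the open interval $\left(a,b\right)$. Thus $f'$ has at least $n+1$ roots in $\left(a,b\right)$, which contradicts the assumption that $f'$ has at most $n$ roots there. Therefore $f$ has at most $n+1$ roots in $\left[a,b\right]$.

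One subtlety: we use only continuity and differentiability of $f$ from ``Rolle's Theorem conditions,'' not the global equality $f(a)=f(b)$. The roots $r_{k}$ supply the equal endpoint values on each subinterval. The only other point needing care is that roots of $f$ may occur at the endpoints $a$ or $b$, while the roots of $f'$ must be located in the open interval. This is handled by the strict inclusion $c_{k}\in\left(r_{k},r_{k+1}\right)\subset\left(a,b\right)$. There is no real obstacle beyond this bookkeeping.
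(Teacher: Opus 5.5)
Your proof is correct. The paper states this corollary without proof, and your argument (apply Rolle's Theorem on consecutive roots, then note that the resulting zeros of $f'$ are distinct because they lie in disjoint open subintervals of $(a,b)$) is the standard one it implicitly relies on. Your remark that $f(a)=f(b)$ is never used is worth keeping: the paper later applies this corollary to $\ell$ on $\bar{I}$, where in general $\ell(0)\neq\ell(t_{go}^{initial})$, so that application needs exactly the endpoint-free version you proved.
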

\begin{corollary} \label{corollary2}
Let $f$ be a function that satisfies Rolle’s Theorem conditions and is continuously differentiable $n-1$ times, and its n-th derivative exists in $\left(a,b\right)$. If the n-th derivative has
no roots in $\left(a,b\right)$, then $f$ has no more than $n$ roots in $\left[a,b\right]$.
\end{corollary}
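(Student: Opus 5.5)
The plan is to prove Corollary~\ref{corollary2} by induction on the order of the derivative, using Corollary~\ref{corollary1} (or directly Rolle's Theorem) at each step. Throughout, I read ``continuously differentiable $n-1$ times'' as saying that $f,f',\dots,f^{(n-1)}$ are continuous on the closed interval $\left[a,b\right]$. I also read it as saying that $f,\dots,f^{(n-2)}$ are differentiable on $\left(a,b\right)$, and that $f^{(n)}$ exists on $\left(a,b\right)$. The hypothesis $f(a)=f(b)$ in Rolle's Theorem plays no role globally. Rolle's Theorem will only be applied on subintervals $\left[x_i,x_{i+1}\right]$ whose endpoints are consecutive roots, where equal endpoint values hold automatically.

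The key step is a one-level lemma. Let $g$ be continuous on $\left[a,b\right]$ and differentiable on $\left(a,b\right)$, and suppose $g'$ has at most $k$ roots in $\left(a,b\right)$. Then $g$ has at most $k+1$ roots in $\left[a,b\right]$. I would prove it by contradiction. Suppose $g$ had $k+2$ distinct roots $x_1<\dots<x_{k+2}$ in $\left[a,b\right]$; if $g$ has infinitely many roots, just pick $k+2$ of them. Rolle's Theorem on each $\left[x_i,x_{i+1}\right]$ gives a point $c_i\in(x_i,x_{i+1})\subset(a,b)$ with $g'(c_i)=0$. These $k+1$ points are distinct because the open subintervals are disjoint, which contradicts the hypothesis. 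This lemma is exactly the content of Corollary~\ref{corollary1}. The base case of the induction is the lemma with $g=f^{(n-1)}$ and $k=0$: since $f^{(n)}$ has no roots in $\left(a,b\right)$, $f^{(n-1)}$ has at most one root in $\left[a,b\right]$.

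Then I would induct downward. Assume $f^{(n-j)}$ has at most $j$ roots in $\left[a,b\right]$, hence in particular at most $j$ roots in $\left(a,b\right)$. Apply the lemma to $g=f^{(n-j-1)}$, which is continuous on $\left[a,b\right]$ and differentiable on $\left(a,b\right)$ with derivative $f^{(n-j)}$. This gives at most $j+1$ roots for $f^{(n-j-1)}$ in $\left[a,b\right]$. After $n$ steps, $f=f^{(0)}$ has at most $n$ roots in $\left[a,b\right]$, as claimed.

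The main obstacle is bookkeeping rather than substance. The regularity needed at each level must be available: continuity on the closed interval and differentiability on the open one. Passing from ``roots in $\left[a,b\right]$'' at one level to ``roots in $\left(a,b\right)$'' at the next is harmless, since the count only decreases when restricting to a subset. The interpretation of the smoothness hypothesis fixed in the first paragraph is what makes every level of the induction legitimate.
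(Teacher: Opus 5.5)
Your proof is correct. The paper gives no proof of this corollary and quotes it as a standard consequence of Rolle's Theorem; your downward induction, which applies Rolle between consecutive roots (i.e., Corollary~\ref{corollary1}) at each derivative level from $f^{(n-1)}$ down to $f$, is exactly the standard argument it relies on.

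Your observation that the hypothesis $f(a)=f(b)$ is never needed is also well taken. In \ref{Appendix_Proof} the corollary is applied to $\ell(t_{go})$ on $\bar{I}$ without any check that $\ell(0)=\ell(t_{go}^{initial})$. For $n=1$, that hypothesis would even contradict ``$\ell'$ has no roots in $I$''. So the version you prove, without equal endpoint values, is the one the paper actually uses.
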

First, for brevity, we use a prime (i.e., $\Box'$) to denote a derivative with respect to $t_{go}$.
The switching times-to-go are the roots of $\ell(\tgo)$ in \eqref{l_def}. According to Corollary \ref{corollary2}, $\ell(\tgo)$ has no more than two roots in $\left[0,\tgo^{initial}\right]\triangleq\bar{I}$ if $\ell''(\tgo)\neq0,\forall\tgo\in \left(0,\tgo^{initial}\right)\triangleq I$. However, this is an over-conservative sufficient condition and depends on the parameters of the problem and the current value of $\bar{Z}(t_{f})$, so it must be checked at every time step.
To improve the computation time and avoid the sensitivity to scenario parameters, we will derive sufficient conditions for $\ell(\tgo)$ to have a maximum of two roots, which are checked once at the beginning of the scenario, and only depend on the properties and coefficients of $\umax(\tgo)$ and the properties of $\varphi (\tgo)$.

\begin{proposition}
$\ell(\tgo)$ has no more than two roots in $\bar{I}$ if $\ell(\tgo)$, $\ell'(\tgo)$, and  $\ell''(\tgo)$ are continuous in $\bar{I}$, differentiable in $I$, and \eqsref{3cond} hold.
\end{proposition}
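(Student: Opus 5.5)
The plan is to avoid a bare Rolle count, which would only give at most three roots from $\ell'''\ge 0$, and instead use the sign information of $\ell''$ together with the endpoint condition \eqref{zero_cond}. First I compute $\ell'''$ from \eqref{l_def}. Since $\alpha|\bar Z(t_f)|$ is a nonnegative constant in $t_{go}$, we have $\ell'''(t_{go})=u_{\max}'''(t_{go})-\alpha|\bar Z(t_f)|\,|\varphi(t_{go})|'''$. By \eqref{umax_cond} and \eqref{uopt_cond} this gives $\ell'''\ge 0$ on $I$. The assumed smoothness lets me apply the mean value theorem to $\ell''$, so $\ell''$ is nondecreasing on $\bar I$. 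Hence there is a point $c\in\bar I$, possibly an endpoint, with $\ell''\le 0$ on $[0,c]$ and $\ell''\ge 0$ on $[c,t_{go}^{initial}]$. In other words, $\ell$ is concave on $[0,c]$ and convex on $[c,t_{go}^{initial}]$.

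Next I bound the roots on the concave piece using $\ell(0)>0$, which is exactly \eqref{zero_cond}. Suppose $\ell(t_1)=0$ for some $t_1\in(0,c]$, and take $t\in(t_1,c]$. Writing $t_1=\lambda\cdot 0+(1-\lambda)t$ with $\lambda\in(0,1)$, concavity gives $0=\ell(t_1)\ge\lambda\ell(0)+(1-\lambda)\ell(t)$. Therefore $\ell(t)<0$. So $\ell$ has at most one root in $[0,c]$, and if it has one, then $\ell(c)<0$ whenever $c>t_1$. If $c=t_1$, that root is shared with the convex piece and I handle it there.

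Then I bound the roots on the convex piece $[c,t_{go}^{initial}]$. The argument mirrors the concave one: a convex function that is $\le 0$ at its left endpoint has at most one further root, since after a positive value it stays positive. Without any sign information, a convex function has at most two isolated roots. I then combine the cases. If the concave piece contains a root, $\ell(c)\le 0$ and the convex piece adds at most one more. If the concave piece contains no root, $\ell>0$ on $[0,c]$ and the convex piece contributes at most two. In every case the total is at most two roots in $\bar I$. This could also be phrased through Corollaries \ref{corollary1} and \ref{corollary2} applied to $\ell'$, which vanishes at most twice under the same sign pattern, but the concave/convex phrasing makes clearer where \eqref{zero_cond} removes the third root.

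The main obstacle is the degenerate case in which the inequalities in \eqref{3cond} are not strict. Then $\ell''$ may vanish on a subinterval, and $\ell$ may be affine, or even identically zero, on that subinterval. A convex or concave function can then have a whole interval of zeros rather than at most two isolated roots. I would treat this by interpreting ``roots'' as sign changes, that is, switching times. This matches how Algorithm \ref{NR_deflation} filters roots. Under that interpretation the convexity arguments above still hold, since an interval of zeros produces at most one transition. Alternatively, if strict counting is required, I would note that $\ell''\equiv 0$ on an interval forces $\ell'''\equiv 0$ there. Outside trivial polynomial coincidences, which can be excluded generically, this does not happen. This is the step I would check most carefully when writing the final proof.
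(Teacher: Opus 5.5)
Your proof is correct in substance and takes a different route from the paper. Both arguments start from $\ell'''\ge 0$, i.e.\ $\ell''$ is nondecreasing. From there the paper enumerates fifteen cases: the sign of $\ell'(0)$ crossed with five admissible sign patterns of $\ell''$. In each case it bounds the number of zeros of $\ell'$ and transfers that bound to $\ell$ through the Rolle corollaries. It uses $\ell(0)>0$ only in the cases where $\ell'\ge 0$ on all of $\bar I$ or on an initial interval. You replace this case analysis with a single concave/convex split at $c$ and two chord inequalities. As a result, $\ell(0)>0$ is used once, exactly where it rules out the third zero that a Rolle count alone would not exclude.

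Your version is shorter and covers every monotone sign pattern of $\ell''$ at once. For instance, the pattern in which $\ell''$ is negative and then identically zero up to $t_{go}^{initial}$ is not explicitly among the paper's five patterns. It is easy to add, but your split needs no such care. The paper's version relies only on the Rolle corollaries it states and never needs convexity, at the cost of a long case analysis.

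The loose end is your last paragraph, and it closes without reinterpreting anything. Counting sign changes instead of zeros would prove a weaker statement. A genericity exclusion is not available either, because the proposition is claimed for every $\ell$ satisfying the hypotheses. For example, zero-order dynamics with a linear limit give $\ell''\equiv 0$ on all of $\bar I$; this is harmless, since $\ell$ is then affine with $\ell(0)>0$.

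Instead, take $c$ to be the largest point of $\bar I$ with $\ell''(c)\le 0$, or $c=0$ if there is none. When nonempty, the set $\{\ell''\le 0\}$ is closed and, by monotonicity, equals $[0,c]$. Then $\ell''>0$ on $(c,t_{go}^{initial}]$, so $\ell'$ is strictly increasing and $\ell$ is strictly convex on $[c,t_{go}^{initial}]$. This makes both of your convex-piece claims literally true: at most two zeros there, and at most one zero in $(c,t_{go}^{initial}]$ when $\ell(c)\le 0$. Every subinterval on which $\ell''$ vanishes identically now lies in $[0,c]$. There your chord argument with $\ell(0)>0$ already forces $\ell<0$ strictly after the first zero, so $\ell$ cannot vanish on an interval anywhere. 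With this choice of $c$, your argument proves the proposition exactly as stated.
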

\begin{proof}
From \eqref{umax_cond} it is evident that $u''_{\max}(\tgo)$ is increasing in $I$, and from \eqref{uopt_cond} it is evident that $\abs{\varphi (\tgo)}''$ is decreasing in $I$. Hence, the second derivative of $\ell(\tgo)$ is increasing in $I$. Therefore, $\ell''(\tgo)$ can be one of five options in $I$: always positive, always negative, always zero, negative and then positive (with a possible intermediate interval for which it equals zero), or zero and then positive. Additionally, the following statement will be useful later on 
\begin{equation} \label{lprime_int}
    \ell'(\tgo)=\ell'(0)+\int_{0}^{\tgo} \ell''(\xi) \, \dd\xi
\end{equation}
The possible combinations for $\ell'(0)$ and $\ell''(\tgo)$ are presented in \tref{Options}, where ${t_{go}}_{1},{t_{go}}_{2}$, and ${t_{go}}_{3}$ are some $\tgo$ values in $I$, and ${t_{go}}_{1} \le {t_{go}}_{2}$.
\begin{table}[hbt!]
\caption{\label{Options} The possible combinations for $\ell'(0)$ and $\ell''(\tgo)$.}
\centering
\begin{tabular}{lccc}
\hline\hline
& $\ell'(0)>0$ & $\ell'(0)<0$ & $\ell'(0)=0$\\\hline
$\ell''(\tgo)>0,\forall\tgo\in I$& Option \#1& Option \#2& Option \#3\\\hline
$\ell''(\tgo)<0,\forall\tgo\in I$& Option \#4& Option \#5& Option \#6\\\hline
$\ell''(\tgo)=0,\forall\tgo\in I$& Option \#7& Option \#8& Option \#9\\\hline
$\ell''(\tgo)<0, \forall \tgo\in \left(0,{t_{go}}_{1}\right)$ & \multirow{3}{*}{Option \#10} & \multirow{3}{*}{Option \#11} & \multirow{3}{*}{Option \#12} \\
$\ell''(\tgo)=0, \forall \tgo\in \left[{t_{go}}_{1},{t_{go}}_{2}\right]$ & & & \\
$\ell''(\tgo)>0, \forall \tgo\in \left({t_{go}}_{2},t_{go}^{initial}\right)$ & & & \\
\hline
$\ell''(\tgo)=0, \quad \forall \tgo\in \left(0,{t_{go}}_{3}\right]$ & \multirow{2}{*}{Option \#13} & \multirow{2}{*}{Option \#14} & \multirow{2}{*}{Option \#15} \\
$\ell''(\tgo)>0, \forall \tgo\in \left({t_{go}}_{3},\tgo^{initial}\right)$ & & & \\
\hline\hline
\end{tabular}
\end{table}

Let us examine each option and conclude the number of roots $\ell(\tgo)$ has in $\bar{I}$ for each one.
First, for options \#1 and \#13 $\ell'(\tgo)$ has no roots in $I$ since it is increasing with an initial condition greater than zero. Similarly, for options \#5 and \#6 $\ell'(\tgo)$ has no roots in $I$ since it is strictly decreasing with a non-positive initial condition. In addition, for options \#7 and \#8 $\ell'(\tgo)$ has no roots in $I$ since $\ell'(\tgo)$ is constant and its initial value is non-zero.
For option \#3 $\ell'(\tgo)$ has no roots in $I$ as well, since it is strictly increasing with a zero initial condition. For all the options mentioned thus far, according to Corollary \ref{corollary2}, $\ell(\tgo)$ has no more than one root in $\bar{I}$, since $\ell'(\tgo)$ has no roots in $I$.

For option \#4 $\ell'(\tgo)$ has no more than one root in $I$, since $\ell'(\tgo)$ is strictly decreasing and its initial condition is positive. Similarly, for options \#2 and \#14 $\ell'(\tgo)$ has no more than one root in $I$, since $\ell'(\tgo)$ is strictly increasing or constant and then strictly increasing, and its initial condition is negative. In addition, it is evident from \eqref{lprime_int} that for options \#11 and \#12 $\ell'(\tgo)$ has no more than one root in $I$ as well, since the initial condition is non-positive and $\ell'(\tgo)$ is decreasing at the beginning of $I$ and then increasing from $t_{go_2}$.
For all those options, Corollary \ref{corollary1} states that $\ell(\tgo)$ has no more than two roots in $\bar{I}$, since $\ell'(\tgo)$ has no more than one root in $I$.

For options \#9 and \#15, $\ell'(\tgo)$ might have an infinite number of roots.
Similarly to \eqref{lprime_int}, the following holds
\begin{equation} \label{l_int}
    \ell(\tgo)=\ell(0)+\int_{0}^{\tgo} \ell'(\xi) \, \dd\xi
\end{equation}
It is evident from \eqref{l_int} that since $\ell'(\tgo)$ is non-negative, and $\ell(0)>0$ (recall \eqref{zero_cond}), $\ell(\tgo)$ has no roots in $\bar{I}$.

The last option left to consider is option \#10, which is divided into two sub-options. The first sub-option is the case in which $\ell'(\tgo)$ is positive at $\tgo=0$, then decreases to zero or a positive number at $\tgo={t_{go}}_{1}$, stays there until $\tgo={t_{go}}_{2}$, and then increases. In that case, $\ell'(\tgo)$ has an infinite number of roots or no roots. Yet, similarly to options \#9 and \#15, it is evident from \eqref{l_int} that since $\ell'(\tgo)$ is non-negative, and since $\ell(0)>0$, $\ell(\tgo)$ has no roots in $\bar{I}$.
The second sub-option is the case in which $\ell'(\tgo)$ is positive at $\tgo=0$, then decreases until it becomes negative, stays at this negative value, and then increases and becomes positive again. In that case $\ell'(\tgo)$ has two roots in $I$.
Let us denote those roots as $t_{go_{r_{1}}}$ and $t_{go_{r_{2}}}$, where $t_{go_{r_{1}}}<t_{go_{r_{2}}}$. 
It is evident that $\ell'(\tgo)>0,\forall\tgo\in\left(0,t_{go_{r_{1}}}\right)$, and recalling \eqref{zero_cond} we obtain $\text{sign}\left[\ell(0)\right]=\text{sign}\left[\ell'(\tgo)\right]$ in that interval.
Thus, it is evident from \eqref{l_int} that $\ell(\tgo)$ has no roots in $\tgo\in\left[0,t_{go_{r_{1}}}\right]$. Since $\ell'(\tgo)$ has just one root in $\tgo\in\left(t_{go_{r_{1}}},\tgo^{initial}\right)$, Corollary \ref{corollary1} states that $\ell(\tgo)$ has no more than two roots in that closed interval. Therefore, $\ell(\tgo)$ has no more than two roots in $\bar{I}$.

We can therefore conclude that for all possible cases, if the conditions in \eqsref{3cond} are satisfied, $\ell(\tgo)$ has no more than two roots in $\bar{I}$, which proves the claim.
\end{proof}

\section{Switching Times-to-go and Terminal ZEM for Zero-Order Missile Dynamics} \label{Appendix_Algo}
In this Appendix, the algorithm for finding the switching times-to-go and $\bar{Z}(t_{f})$ is presented for a zero-order missile and a parabolic acceleration limit. This combination enables an analytical solution for the switching times-to-go given $\bar{Z}(t_{f})$, thereby simplifying the algorithm and improving convergence time. Moreover, since this algorithm converges substantially faster, its result is used as the initial guess in the general algorithm presented in \ref{Algo_SwitchingTimes}. First, the algorithm is presented, and then an analysis of the maximal and minimal values of $\bar{Z}(t_{f})$ in the flag=3 region is performed.

\subsection{The Algorithm}
The switching times-to-go are the intersection points between the optimal unsaturated acceleration command and its time-to-go-dependent limit, that is
\begin{equation} \label{intercession zero}
\tilde{b}_{2}t_{go_{s}}^{2}+\tilde{b}_{1}t_{go_{s}}+\tilde{b}_{0} =\alpha |\bar{Z}(t_{f}) K_M| t_{go_{s}} 
\end{equation}
This is a second order polynomial in the switching times-to-go, which can be solved as functions of $\bar{Z}(t_{f})$
\begin{equation} \label{tgos_0}
t_{go_{s_{1,2}}} = \frac{1}{2\tilde{b}_{2}}\left( -\tilde{b}_{Z}\mp\sqrt{\tilde{b}_{Z}^{2}-4\tilde{b}_{2}\tilde{b}_{0}}\right), \quad    \tilde{b}_{Z} \dfn \tilde{b}_{1}-\alpha\left| \bar{Z}(t_{f})K_{M}\right|
\end{equation}
Assuming the value of flag for a certain time-to-go is known, one can substitute \eqref{tgos_0} into \TWOeqsref{Ztf_3}{Ztf_2} and calculate $\bar{Z}(t_{f})$ numerically for flag=3 and flag=2, respectively. For flag=1 the missile is unsaturated and the expression for $\bar{Z}(t_{f})$ is independent of the switching times-to-go (recall \eqref{Ztf_1}).
The implemented algorithm is presented in \algoref{Ztf0_algo}.
First, the algorithm searches for intersection points between the unbounded optimal controller and its limits. If there are no such points, or if the times-to-go at which the intersections occur are greater than the current value of the time-to-go, then the missile is at flag=1, and the BOGL reduces to the OGL.
If the missile is not in the flag=1 region, the algorithm checks if it is in the flag=3 region.
If, at a given $\tgo$, the missile is at flag=3, the predicted miss distance at the end of the scenario must satisfy
\begin{equation} \label{flag3_cond}
    \mathcal{Z}_{\min}\le\abs{\bar{Z}(t_{f})}_{\text{flag}(t_{go})=3}\le\mathcal{Z}_{\max}, \quad
       \mathcal{Z}_{\min} = \frac{\tilde{b}_{1}+2\sqrt{\tilde{b}_{2}\tilde{b}_{0}}}{\alpha\abs{K_{M}}}, \quad \mathcal{Z}_{\max} = \frac{\tilde{b}_{2}\tgo^{2}+\tilde{b}_{1}\tgo+\tilde{b}_{0}}{\alpha\abs{K_{M}}\tgo} 
\end{equation}
The expressions in \eqref{flag3_cond} are obtained from the definition of the flag=3 region, where $\abs{\bar{Z}(t_{f})}<\mathcal{Z}_{\min}$ means the missile is in the flag=1 region, and $\abs{\bar{Z}(t_{f})}>\mathcal{Z}_{\max}$ means the missile is in the flag=2 region.
The full derivation is presented in the next subsection to improve readability.
Similarly to the analysis presented in \secref{Sec:SolutionExistence}, the corrections the missile can achieve using the optimal controller with $\mathcal{Z}_{\min}$ and $\mathcal{Z}_{\max}$ are obtained by substituting them into \TWOeqsref{Corflag1}{2}, respectively
\begin{equation} \label{Delta_Zmin_Zmax}
        \Delta Z_{\min} = \alpha \mathcal{Z}_{\min} I_{2}(\tgo), \quad
        \Delta Z_{\max} = \alpha \mathcal{Z}_{\max}I_{2}\left(t_{go_{s_{1}}}^{*}\right)+\abs{I_{1}(\tgo)-I_{1}\left(t_{go_{s_{1}}}^{*}\right)}, \quad 
        t_{go_{s_{1}}}^{*} = \frac{\tilde{b}_{0}}{\tilde{b}_{2} \tgo}
\end{equation}
where $t_{go_{s_{1}}}^{*}$ is obtained by substitution of $\mathcal{Z}_{\max}$ in \eqref{flag3_cond} as $\bar{Z}(t_{f})$ into $t_{go_{s_{1}}}$ in \eqref{tgos_0}.
If the missile is indeed in the flag=3 region, \eqref{flag3_cond} implies that the following conditions must hold
\begin{subequations} \label{flag3_cond_algo}
\begin{align}
    &\abs{Z_0(t)}_{\text{flag}(t_{go})=3}\le \mathcal{Z}_{\max}+\Delta Z_{\max}  \label{Delta_Zmax} \\
    &t_{go_{s_{1}}}^{*}<\tgo \label{tgo_s1} \\
   &\textcolor{black}{\abs{Z_0(t)}_{\text{flag}(t_{go})=3}\ge \mathcal{Z}_{\min}+\Delta Z_{\min}}  \label{Delta_Zmin}
    \end{align}
\end{subequations}
Since the algorithm first checks if the missile is in the flag=1 region, the condition in \eqref{Delta_Zmin} is satisfied automatically when checking if the missile is in the flag=3 region.
If the missile is not in the flag=1 region or in the flag=3, it is, of course, in the flag=2 region, in which $\abs{\bar{Z}(t_{f})}_{\text{flag}(t_{go})=2}>\mathcal{Z}_{\max}$.

The numerical method we used to calculate $\bar{Z}(t_{f})$ is the Newton–Raphson method, in which the algorithm gets an initial guess, and the update equation is given by
\begin{equation} \label{Newton-Raphson_eqn}
    \bar{Z}(t_{f})_{k+1} = \bar{Z}(t_{f})_{k} - \frac{F_{i}\left[\bar{Z}(t_{f})_{k} \right]}{F'_{i}\left[\bar{Z}(t_{f})_{k} \right]}, \quad F_{i}[\bar{Z}(t_{f})] = \bar{Z}(t_{f})D_{i}[\bar{Z}(t_{f})]-N_{i}[\bar{Z}(t_{f})], \quad i\in\left\{2,3 \right\}
\end{equation}
where $i$ is the value of flag, and $F_{2}, F_{3}$ are derived from \TWOeqsref{Ztf_2}{Ztf_3}, respectively, and
\begin{subequations}
    \begin{align}
        &N_2[\bar{Z}(t_{f})]=Z(t)+\sign\left[ Z(t)\right]\left[ I_{1}(t_{go})-I_{1}(t_{go_{s_{1}}})\right] , \,
        D_2[\bar{Z}(t_{f})] =1+\alpha I_{2}(t_{go_{s_{1}}}) \label{F2}\\
        &N_3[\bar{Z}(t_{f})]=Z(t)+\sign\left[ Z(t)\right]\left[ I_{1}(t_{go_{s_{2}}})-I_{1}(t_{go_{s_{1}}})\right], \,
         D_3[\bar{Z}(t_{f})]=1+\alpha\left[ I_{2}(t_{go})+I_{2}(t_{go_{s_{1}}})-I_{2}(t_{go_{s_{2}}})\right] \label{F3}
    \end{align}
\end{subequations}
where $t_{go_{s_{1}}},t_{go_{s_{2}}}$ are also functions of $\bar{Z}(t_{f})$, as can be seen in \eqref{tgos_0}. The derivatives of $F_{2},F_{3}$ in \eqref{Newton-Raphson_eqn} with respect to $\bar{Z}(t_{f})$, i.e. $F'_{2}$ and $F'_{3}$, are
computed analytically, and not presented explicitly to improve readability.
The Newton–Raphson algorithm is presented in \algoref{Newton-Raphson_algo}, where the terms $\Delta t_{go_{s_{1}}},\Delta t_{go_{s_{2}}}$ are the differences between the first and the second switching times-to-go between two consecutive iterations, respectively, and $t_{go_{s_{tol}}}$ is the convergence termination criteria on $\Delta t_{go_{s_{1}}},\Delta t_{go_{s_{2}}}$ between iterations.
\begin{algorithm}[hbt!]
\caption{Calculating $\bar{Z}(t_{f})$ for Zero-Order Missile Dynamics and a Parabolic Acceleration Limit.}\label{Ztf0_algo}
    \begin{algorithmic}[1]
        \State Compute $\bar{Z}(t_{f})$ assuming flag=1 (\eqref{Ztf_1}).
        \State Find $t_{go_{s_{1}}}\left[\bar{Z}(t_{f}) \right],t_{go_{s_{2}}}\left[\bar{Z}(t_{f}) \right]$ (\eqref{tgos_0}).
        \If {$t_{go_{s_{1}}}$ has a complex value or $t_{go_{s_{1}}}>t_{go}$}
            \State The missile is in the flag=1 region.
            \State Return $\bar{Z}(t_{f})$.
        \Else
            \State Compute $\mathcal{Z}_{\min}$ and  $\mathcal{Z}_{\max}$ (\eqref{flag3_cond}) and $\Delta Z_{\max}$
            (\eqref{Delta_Zmin_Zmax}).
            \If {the conditions for flag=3 in \TWOeqsref{Delta_Zmax}{tgo_s1} are satisfied}
                \State Compute $\bar{Z}(t_{f})$ assuming flag=3 using \algoref{Newton-Raphson_algo}.  
                \State Return $\bar{Z}(t_{f})$.
            \Else
            \State Compute $\bar{Z}(t_{f})$ assuming flag=2 using \algoref{Newton-Raphson_algo}.
            \State Return $\bar{Z}(t_{f})$.
            \EndIf
        \EndIf
    \end{algorithmic}
\end{algorithm}
\begin{algorithm}[hbt!]
\caption{Newton-Raphson for Zero-Order Missile Dynamics and a Parabolic Acceleration Limit.}\label{Newton-Raphson_algo}
    \begin{algorithmic}[1]
        \If {first time step}
        \State Set $\bar{Z}(t_{f})_{0}$ as the $\bar{Z}(t_{f})$ from the unbounded case (\eqref{Ztf_1}).
        \Else
        \State Set $\bar{Z}(t_{f})_{0}$ as the $\bar{Z}(t_{f})$ from the previous time step.
        \EndIf
        \State Compute the values of $t_{go_{s_{1}}},t_{go_{s_{2}}}$ using $\bar{Z}(t_{f})_{0}$ (\eqref{tgos_0}).
        \State Compute $\bar{Z}(t_{f})_{1}$ using \eqref{Newton-Raphson_eqn}.
        \State Recompute the values of $t_{go_{s_{1}}},t_{go_{s_{2}}}$ using $\bar{Z}(t_{f})_{1}$ (\eqref{tgos_0}).
        \While {$\left|\Delta t_{go_{s_{1}}} \right|,\left|\Delta t_{go_{s_{2}}} \right|>t_{go_{s_{tol}}}$}
            \State Compute $\bar{Z}(t_{f})_{k+1}$ using \eqref{Newton-Raphson_eqn}.
            \State Recompute $t_{go_{s_{1}}},t_{go_{s_{2}}}$ according to the updated $\bar{Z}(t_{f})_{k+1}$ value (\eqref{tgos_0}).
        \EndWhile
        \State Return $\bar{Z}(t_{f})$.
    \end{algorithmic}
\end{algorithm}

\subsection{\texorpdfstring{Maximal and Minimal Values of $\left|\bar{Z}(t_{f}) \right|$ in the flag=3 Region}{Maximal and Minimal Values}}\label{Maximal and Minimal Values}
In this subsection, we find the maximal and the minimal values of $\abs{\bar{Z}(t_{f})}$ for the flag=3 region. 
They are used to check if the missile is in that region, as can be seen in \algoref{Ztf0_algo}. 
Let us assume, for a certain value of $\tgo$ denoted by $\tgo^{*}$, that the missile is in the flag=3 region. It is evident from \eqref{Bounded Optimal Controller - 0th} that a smaller value of $\abs{\bar{Z}(t_{f})}$ leads to a smaller slope of the unsaturated optimal controller line. Therefore, a too small value of $\abs{\bar{Z}(t_{f})}$ may cause zero intersection points between the unsaturated optimal controller line and the acceleration limit curve in $\tgo\in(0,\tgo^{*}]$. In such a case, the missile is currently in the flag=1 region, which is not the assumed case, as mentioned before. Similarly, a too large value of $\abs{\bar{Z}(t_{f})}$ may lead to a single intersection point in $(0,\tgo^{*}]$, and another one in $(\tgo^{*},\infty)$. In such a case, the missile is currently in the flag=2 region, which is also not the assumed current case. Hence, if the missile is in the flag=3 region, there are $\mathcal{Z}_{\min},\mathcal{Z}_{\max}$ for which $\abs{\bar{Z}(t_{f})}\in\left[\mathcal{Z}_{\min},\mathcal{Z}_{\max}\right]$.
A qualitative graphical representation of the above is presented in \figref{Zmax_Zmin}.

\begin{figure}[hbt!]
\centering
\includegraphics[width=0.57\textwidth]{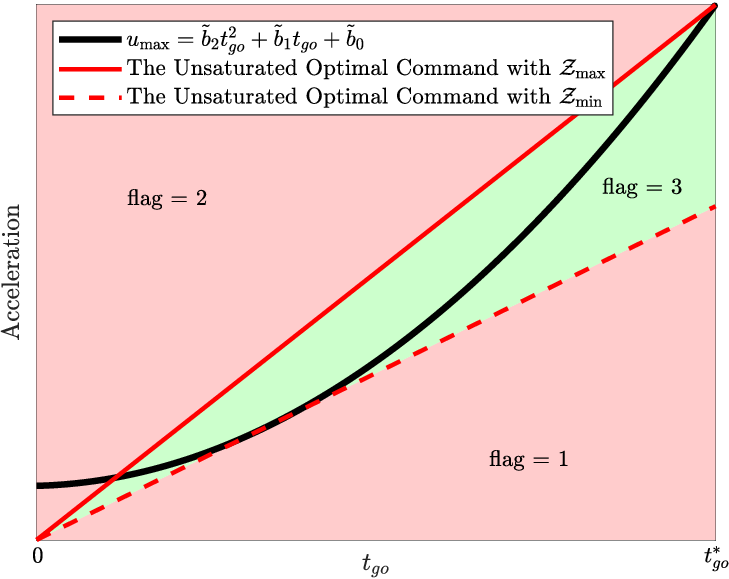}
\caption{\label{Zmax_Zmin} A qualitative graphical representation of $\mathcal{Z}_{\min},\mathcal{Z}_{\max}$.}
\end{figure}

Since the switching times-to-go must be positive, \eqref{tgos_0} implies that $\tilde{b}_{Z}=\tilde{b}_{1}-\alpha \abs{Z(t_{f})K_{M}}<0$. 
In addition, as mentioned in \secref{Finding the Parameters}, $\tilde{b}_{2}>0$ so the acceleration limit approximation represents the convex behavior of the real, altitude-dependent, acceleration limit.
Moreover, since $\tilde{b}_{0}>0$, it is evident that real solutions to the switching times-to-go exist if and only if $\abs{\tilde{b}_{Z}} \ge 2\sqrt{\tilde{b}_{2}\tilde{b}_{0}}$.
Since $\tilde{b}_{Z}$ must be negative, we obtain that a necessary condition for the missile to be currently in the flag=2 or flag=3 regions is $\abs{\bar{Z}(t_{f})}\ge\mathcal{Z}_{\min}$, where $\mathcal{Z}_{\min}$ is presented in \eqref{flag3_cond}.

In the flag=3 region, the maximal possible value of $\tgosTwo$ is $\tgo^{*}$. Moreover, it is evident from \eqref{tgos_0} that a larger value of $\abs{\bar{Z}(t_{f})}$ yields a larger value of $\tgosTwo$ (a larger slope). That is, the maximal value of $\abs{\bar{Z}(t_{f})}$ in flag=3 region is obtained by substituting $\tgosTwo=\tgo^{*}$ into \eqref{intercession zero} and solve for $\abs{\bar{Z}(t_{f})}$.
Hence, a necessary condition for the missile to be in the flag=3 region is $\abs{\bar{Z}(t_{f})}\le\mathcal{Z}_{\max}$, where $\mathcal{Z}_{\max}$ is presented in \eqref{flag3_cond}.
In conclusion, if the missile is in the flag=3 region, the absolute value of the expected miss distance at the end of the scenario,
must satisfy $\abs{\bar{Z}(t_{f})} \in \left[\mathcal{Z}_{\min},\mathcal{Z}_{\max}\right]$.

\section*{Acknowledgment}
This work was supported by the PMRI – Peter Munk Research Institute - Technion. 

The authors acknowledge the use of ChatGPT and Grammarly for editorial assistance during manuscript preparation, including grammar correction and language refinement.

\bibliography{sample}

\end{document}

%% file: command.tex
\newcommand{\dfn}{\triangleq}

\newcommand{\dd}{\text{d}}
 
\newcommand{\sign}{\text{sign}}

\newcommand{\tgo}{t_{go}}
\newcommand{\umax}{u_{\max}}

\newcommand{\tgosTwo}{t_{go_{s_{2}}}}

\newcommand{\abs}[1]{\left|{#1} \right|}

\renewcommand{\eqref}[1]{Eq.~(\ref{#1})}
\newcommand{\tableref}[1]{Table~(\ref{#1})}
\newcommand{\eqsref}[1]{Eqs.~(\ref{#1})}

\newcommand{\figref}[1]{Fig.~\ref{#1}}

\newcommand{\secref}[1]{Sec.~\ref{#1}}

\newcommand{\THREEeqsref}[3]{Eqs.~(\ref{#1}), (\ref{#2}), and (\ref{#3})}
\newcommand{\TWOeqsref}[2]{Eqs.~(\ref{#1}) and (\ref{#2})}
\newcommand{\tref}[1]{Table.~\ref{#1}}
\newcommand{\algoref}[1]{Algorithm ~\ref{#1}}

\newcommand{\comment}[1]{}